\newtheorem{theorem}{Theorem}
\newtheorem{lemma}[theorem]{Lemma}
\newtheorem{claim}[theorem]{Claim}
\theoremstyle{definition}
\newcommand{\pa}[1]{\left( #1 \right)}
\newcommand{\DeltaD}{\Delta_{\bar{d}}}
\newcommand{\D}{\bar{D}}
\newcommand{\diam}{\textnormal{diam}}
\newcommand{\floor}[1]{\lfloor #1 \rfloor}
\title{On the Role of Hypocrisy in Escaping the Tragedy of the Commons}
\author[1]{Amos Korman}
\author[2]{Robin Vacus}
\affil[1]{\small Corresponding author. UMI FILOFOCS, CNRS, UP7, TAU, HUJI, WIS, International Joint Research Unit.  E-mail: {\tt amos.korman@irif.fr}}
\affil[2]{\small IRIF, CNRS and University of Paris, France.  E-mail: {\tt rvacus@irif.fr}}
\begin{document}

\date{}
\maketitle

\begin{abstract}
{\bf
We study the emergence of cooperation in large spatial public goods games. Without employing severe social-pressure against ``defectors'', or alternatively, significantly rewarding ``cooperators'', theoretical models typically predict a system collapse in a way that is reminiscent of the ``tragedy-of-the-commons'' metaphor. Drawing on a dynamic network model, this paper demonstrates how cooperation can emerge when the social-pressure is mild. This is achieved with the aid of an additional behavior called ``hypocritical'', which appears to be cooperative from the external observer's perspective but in fact hardly contributes to the social-welfare. Our model assumes that social-pressure is induced over both defectors and hypocritical players, but the extent of which may differ. Our main result indicates that the emergence of cooperation highly depends on the extent of social-pressure applied against hypocritical players. Setting it to be at some intermediate range below the one employed against defectors allows a system composed almost exclusively of defectors to transform into a fully cooperative one quickly. Conversely, when the social-pressure against hypocritical players is either too low or too high, the system remains locked in a degenerate configuration.
}
\end{abstract}

\section*{Introduction}
The ``tragedy-of-the-commons'' metaphor, popularized by Hardin in 1968~\cite{hardin}, aims to capture situations in public goods systems where self-interested individuals behave contrary to the common good by depleting or spoiling the shared resource. 
In the 21st century, this metaphor finds relevance in several of our global environmental challenges \cite{houghton2001climate,ostrom1990governing}, where the shared resource can be considered, depending on the context, as an aspect of the ecosystem. For example, excessive beef consumption by a substantial number of individuals induces vast livestock production that degrades air and water quality and causes a considerable increase in greenhouse gas emissions \cite{eshel2014land}. Conversely, our environment would significantly benefit if a large portion of individuals in the population would self-restraint the amount of beef they consume. Therefore, improving our understanding of the emergence of cooperation in public goods systems goes beyond the purely theoretical interest and may prove to be of practical importance. 

Theoretical studies on the emergence of cooperation typically assume that players act according to few stereotyped behaviors, the most common being ``defector'', and ``cooperator'' \cite{ohtsuki2006simple,nowak2006five,taylor2007evolution,allen2019evolutionary,allen2017evolutionary}.
A cooperator pays an energetic cost to produce a benefit $b$ for others, whereas a defector does not contribute anything but also does not pay any energetic cost. 
In recent years, significant attention has been devoted to study the impact of the populations' structure on the emergence of cooperation \cite{perc2013evolutionary,ohtsuki2006simple,allen2014games,debarre2014social}. 
These works assume that players are organized over a fixed network, 
with the vertices representing the players and the edges representing reciprocal relations between neighbors. Naturally, the dynamics of the system strongly depend on the mutual relations between neighboring players. 

For example, several of the works on cooperation in structured populations assume that the  benefit $b$ produced by a cooperative player is shared equally by its neighbors. For such a model, Ohtsuki et al.~showed that cooperation emerges when the ratio between the benefit per per neighbor and the cost of producing it exceeds a certain threshold \cite{ohtsuki2006simple}. However, large public goods games, especially those on the scale that affects the environment, exhibit a very different framework of reciprocity \cite{cinyabuguma2006can,milinski2002reputation,rand2009positive,rege2004impact,yamagishi1986provision}. Rather than being shared by immediate neighbors,
the benefit $b$ is shared by all individuals, practically making the marginal per-capita return gain (MPCR) negligible compared to the cost of cooperating. This violates the condition for the evolution of cooperation based on reciprocity \cite{ohtsuki2006simple,nowak2006five,taylor2007evolution,allen2019evolutionary,allen2017evolutionary} suggesting that cooperation in large public goods games might be difficult to achieve without considering other factors, such as rewards or punishments. 

It is well-known that global cooperation can emerge when players severely punish their neighboring defectors (or, alternatively, significantly reward their cooperating neighbors) \cite{sigmund2001reward,nowak2006five,rankin2007tragedy,axelord1984evolution,milinski2002reputation,fehr2000cooperation}. However, inducing severe punishments on others may be  costly, and hence reaching high levels of social-pressure is by itself a non-trivial problem, often referred to in the literature as the {\em second-order free riders} problem
\cite{boyd1992punishment,boyd2003evolution,cinyabuguma2006can,eldakar2008selfishness,fowler2005altruistic,heckathorn1989collective,heckathorn1996dynamics,helbing2010evolutionary,helbing2010punish,panchanathan2004indirect,yamagishi1986provision}.
A crucial parameter in the second-order problem is the cost of punishing, which may be correlated to the extent of punishment \cite{rockenbach2006efficient}. Clearly, when the cost exceeds a certain threshold, people would avoid punishing non-cooperators. However, when the cost is low, other factors, such as reputation considerations, can subsume the cost, ultimately making punishing beneficial \cite{jordan2016third,wedekind2000cooperation,zahavi1995altruism}. It is therefore of interest to study the emergence of cooperation in the presence of moderate punishments or mild social-pressure. 

Specifically, we are interested in a regime of social-pressure that is high enough to maintain an already cooperative system, but is insufficient to transform a system that initially includes a large number of defectors into a cooperative one. To illustrate this, let us consider the context of recycling and an imaginary person named Joe. When almost all of Joe's neighbors are recycling (i.e., cooperating), the social-pressure cost they induce on him can accumulate to overshadow the burden cost of recycling and incentivize him to also recycle. Conversely, when almost all of Joe's neighbors are not recycling (i.e., defecting), the burden of recycling may exceed the overall social-pressure, effectively driving Joe to defect. This raises a natural question: 
\begin{center}
    {\em How can a  system that utilizes mild social-pressure\\  escape the tragedy-of-the-commons when it is\\ initially composed mostly of defectors?}
    \end{center}
\vspace{2mm} 

The aforementioned recycling abstraction  includes two extreme behaviors: defector and cooperative. Another type of  generic behavior is {\em hypocritical} \cite{trivers1971evolution,centola2005emperor,heckathorn1989collective,heckathorn1996dynamics,helbing2010evolutionary,eldakar2008selfishness},
which was also experimentally studied in \cite{falk2005driving,shinada2004false}. In our interpretation, a hypocritical individual pretends to be cooperative in order to reduce the social-pressure that it might experience as a defector, and, at the same time, avoids the high energetic cost incurred by a cooperator. To pretend to be a cooperator, a hypocritical individual must  invest a small amount of energy in contributing to the social welfare, as well as mimic the behavior of cooperators towards their peers. This means that such players, similarly to cooperators, also induce mild  social-pressure. 
It was previously suggested that hypocritical behavior can incentivize global cooperation 
\cite{helbing2010evolutionary,helbing2010punish}. However, in these works, similarly to many other papers on the emergence or evolution of cooperation based on reciprocity  \cite{ohtsuki2006simple,nowak2006five,taylor2007evolution,allen2019evolutionary,allen2017evolutionary}, the dynamics heavily relies on the assumption that players gain substantially from the presence of nearby cooperators. As mentioned, this assumption is hardly justifiable in large-scale public goods scenarios such as the ones we consider.  

\section*{Results}

We consider public goods games played iteratively over a fixed connected network. 
The vertices of the network represent the players and the edges represent neighboring connections \cite{perc2013evolutionary,ohtsuki2006simple,allen2014games,debarre2014social}. 
The dynamics evolves over discrete rounds. In each round, each player chooses a behavior that minimizes its cost, where the player's cost is affected by its own behavior and the behaviors of its neighbors.

Our main model includes three behavior types, namely, defector, hypocritical, and cooperator, in which those who hardly contribute to the social welfare, i.e., defector and hypocritical players, face the risk of being caught and punished by their non-defector neighbors. 
The level of risk together with the extent of punishment is captured by a notion that we call {\em ``social-pressure''}.
The main result is that adjusting the level of social-pressure employed against hypocritical players compared to the one employed against defectors can have a dramatic impact on the dynamics of the system. Specifically, letting the former level of social-pressure be within a certain range below the latter level, allows the system to quickly transform from being composed almost exclusively of defectors to being fully cooperative. Conversely, setting the level to be either too low or too high locks the system in a degenerate configuration.

As mentioned, our main model assumes that non-defector players induce mild social-pressure on their defector neighbors. This implicitly assumes that inducing the corresponding  social-pressure is  beneficial (e.g., allows for a social-upgrade), although other explanations have also been proposed \cite{fehr2000cooperation}. To remove this implicit assumption we also consider a generalized model, called the {\em two-order model}, which includes costly punishments. Consistent with previous work on the second-order problem, e.g., 
\cite{boyd1992punishment,eldakar2008selfishness,fowler2005altruistic,heckathorn1996dynamics,helbing2010evolutionary,helbing2010punish},
this model distinguishes between first-order cooperation, that corresponds to  actions that directly contribute to the social welfare, and second-order cooperation, that corresponds to applying (costly) social-pressure, or punishments, on others.
Similarly to the main model, the level of punishment employed against first-order defectors may differ from that employed against second-order defectors. 
We identify a simple criteria for the emergence of cooperation: For networks with minimal degree $\Delta$, cooperation emerges when two conditions hold. The first condition states that the cost $\alpha_2$ of employing  punishments against second-order defectors should be smaller than the corresponding punishment $\beta_2$ itself, i.e., $\alpha_2<\beta_2$. The second condition states that the cost $\alpha_ 1$ of employing punishments against first-order defectors should be smaller than the corresponding punishment $\beta_1$ times the minimal number of neighbors, i.e., $\alpha_ 1<\beta_1\cdot\Delta$. The second condition is also a necessary condition for the emergence of cooperation in the two-order model.

\subsection*{The main model}
The model considers two extreme behaviors, namely, {\em cooperative} $(c)$ and {\em defector} $(d)$, and an additional intermediate behavior, called {\em hypocritical} $(h)$. 
The system starts in a configuration in which almost all players, e.g., $99\%$, are defectors (see Methods). 
Execution proceeds in discrete rounds. The cost of a player depends on its own behavior and on the behavior of its neighbors.  All costs are evaluated at the beginning of each round, and then, before the next round starts, each player chooses 
a behavior that minimizes its cost (breaking ties randomly), given the current behavior of its neighbors. 
In contrast to many previous works on cooperation in networks \cite{ohtsuki2006simple,nowak2006five,taylor2007evolution,allen2019evolutionary,allen2017evolutionary}, we assume that benefits from  altruistic acts are negligible (i.e., the MPCR is zero), so that a player does not gain anything when others cooperate.
 
The cost of a player $u$ with a behavior type $i\in\{d,h,c\}$ is composed of two components: the {\em energetic cost} $E_i$ associated with the contribution to the social welfare, and the {\em social-pressure cost} $S_i(u)$ it faces, that is:
\[{\cal{C}}_i(u)=E_i+S_i(u).\]
We assume that the energetic cost of a defector is 0, and the energetic cost of a cooperator is $1$,  where the value of $1$ is chosen for normalization:
\[E_{d}=0~~ \mbox{and}~~ E_{c}=1.\]
A hypocritical player produces the minimal social welfare required to pretend to be cooperative. 
Hence, we assume that
\[
0<E_h<1,
\]
thinking of $E_h$ as closer to 0 than to 1. 

As mentioned above, we focus on relatively mild social-pressure induced by cooperative players, aiming to improve their social status. Since hypocritical players aim to appear similar to cooperators from the perspective of an external observer, we assume that they too induce social-pressure on their neighbors. Defectors, on the other hand, do not induce any social-pressure since such an enhancement of the social status is not justified for them. In principle, cooperators and hypocritical players might induce different levels of social-pressure, yet, for the sake of simplicity, we assume that they induce the same extent of social-pressure. This assumption is further justified by the fact that a player $u$ cannot distinguish its hypocritical neighbors from its cooperative neighbors, hence, $u$'s calculation of the social-pressure is evaluated assuming all of its non-defector neighbors are cooperators. 

Formally, we assume that the possible social-upgrade gain associated with cooperators or hypocritical players as a result of applying social-pressure is already taken into account when calculating the energetic costs $E_c$ and $E_h$. Since we assume that this gain is small, it hardly perturbs the cost, keeping the energy consumption as the dominant component. 

Implicitly, we think of  the social-pressure cost incurred by a player $u$
as the product of two factors: (1) the risk of being caught, which is assumed to be proportional to the number of $u$'s neighbors inducing social-pressure, and (2) a fixed penalty paid when caught, which depends on $u$'s behavior. The product of the risk and penalty represents the expected punishment in the next round, if behaviors remain the same. 

Cooperators are assumed to pay zero penalty, and are hence effectively immune to social-pressure:
\[S_{c}(u)=0.\]
Conversely, the social-pressure induced over defectors and hypocritical players is non-zero.
For a given round, let $\DeltaD(u)$ denote the number of neighbors of $u$ which are non-defectors at that round. The social-pressure cost induced over a defector, and respectively, a hypocritical, player $u$ is: 
\[S_{d}(u)=\rho_{d}\cdot \DeltaD(u),~~~\mbox{respectively, }~~~S_{h}(u)=\rho_{h}\cdot \DeltaD(u),\]
where $\rho_{d} >0$, respectively $\rho_{h} >0$, represents the social-pressure  induced over a defector, respectively a hypocritical, from one neighboring non-defector. We focus on the regime where $\rho_h < \rho_d$, since otherwise, becoming a defector is always more beneficial than becoming a hypocritical.

To sum up, at a given round, the total cost incurred by a player $u$ is:
\begin{equation*}
    {\cal{C}}(u) = \begin{cases} 1 & \mbox{if $u$ is a cooperator,} \\
    \rho_{d} \cdot \DeltaD(u) & \mbox{if $u$ is a defector,} \\
    E_h + \rho_{h} \cdot \DeltaD(u) & \mbox{if $u$ is hypocritical.} \end{cases}
\end{equation*}

Before stating our main result, we recall few standard definitions in graph-theory \cite{diestel2005graph}. The {\em diameter} of a network $G$, denoted $ \diam(G)$, is the maximal distance between any pair of players (see Methods). A network is {\em $\Delta$-regular}, if every player has precisely $\Delta$ neighbors. Theorem~\ref{thm:main} below assumes that the underlying network is $\Delta$-regular. 
However, this theorem can be generalized to arbitrary networks with minimal degree $\Delta$ (see SI, Theorem \ref{thm:main-SI}).
\begin{theorem}\label{thm:main}
Consider a $\Delta$-regular network $G$ with $n$ players. Assume that 
\begin{equation} \label{eq1}
    ({1-E_h})/{\Delta}<\rho_{h}<\rho_{d}-E_h.
\end{equation}
Then, with probability at least $1-\frac{1}{c^n}$, for some constant $c>1$, in at most $3 \cdot \diam(G)+1$ rounds, the system will be in a configuration in which all players are cooperative, and will remain in this configuration forever.
\end{theorem}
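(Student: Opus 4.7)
The plan is to reduce the dynamics to a BFS-type propagation of the non-defector set, and then argue that once every player is non-defecting, the very next round is fully cooperative and stable. First I would translate the three cost expressions into threshold conditions on $k := \DeltaD(u)$. Setting $T_h := E_h/(\rho_d-\rho_h)$ and $T_c := (1-E_h)/\rho_h$, a pairwise comparison of the cost formulas shows that hypocritical beats defector iff $k>T_h$, and cooperator beats hypocritical iff $k>T_c$. The upper bound $\rho_h < \rho_d - E_h$ in \eqref{eq1} forces $T_h < 1$, so any single non-defecting neighbor already rules out defection; the lower bound $\rho_h > (1-E_h)/\Delta$ forces $T_c < \Delta$, and combined with $E_h>0$ also yields $\rho_d\Delta > 1$, so a neighborhood of all non-defectors forces the unique best response to be cooperator.

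Letting $N_t \subseteq V$ denote the set of non-defectors at round $t$, the ``$T_h<1$'' clause gives the clean update
\[
 N_{t+1} \;=\; \{u \in V \,:\, N(u)\cap N_t \neq \emptyset\},
\]
i.e., $u$ is non-defector at round $t+1$ iff some neighbor was non-defecting at round $t$. This rule is BFS-like but \emph{not} globally monotone: an isolated vertex of $N_t$ exits at round $t+1$, and on a bipartite-regular graph the whole of $N_t$ can oscillate between the two parts. The next step is therefore to locate a stable seed. Using the random initial distribution (per Methods, each vertex is independently non-defecting with some positive probability $p$), I would union-bound over a matching of size $\Omega(n)$ in $G$ to show that with probability at least $1 - c^{-n}$ the set $N_0$ contains some edge $\{u_0, v_0\}$ of $G$: the failure probability is $(1-p^2)^{\Omega(n)} = e^{-\Omega(n)}$.

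Conditioning on such a seed edge, I would prove by induction on $t$ that $N_t$ contains the graph ball $B_t(\{u_0,v_0\})$ of radius $t$ around the edge. The inductive step is immediate once one notes that $u_0$ and $v_0$ stay in $N_t$ for all $t$ (each is the other's non-defecting neighbor, so neither ever becomes isolated), and that every $w \in B_{t+1}$ has some neighbor in $B_t \subseteq N_t$ by definition of the BFS ball. Since any edge of $G$ has eccentricity at most $\diam(G)$, we obtain $N_{\diam(G)}=V$; at round $\diam(G)+1$ every player sees $\DeltaD=\Delta > T_c$ and plays cooperator; and the all-cooperator configuration is a fixed point of the same rule, so it reproduces itself forever. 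This tight count of $\diam(G)+1$ comfortably fits under the stated $3\,\diam(G)+1$, with the slack absorbing the generalization to networks of minimum degree $\Delta$ and any cruder accounting of initial noise rounds.

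The main obstacle is the third step. Because the update rule is not monotone, one cannot simply claim the non-defecting region grows everywhere --- it can in fact oscillate. The trick that sidesteps this is to single out one seed edge whose two endpoints mutually protect each other against the ``isolated-vertex exits'' phenomenon, and to track only the monotone growth of the BFS ball around that seed, letting the uncontrolled behavior elsewhere in the graph be absorbed into the ball by round $\diam(G)$.
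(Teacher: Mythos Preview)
Your approach is correct and in fact recovers the sharper bound $\diam(G)+1$, but it diverges from the paper's proof in exactly one place: how the seed edge is obtained. The paper does \emph{not} ask the random initial configuration for an adjacent non-defecting pair. Instead it starts from the much weaker event ``at least one non-defector exists'' (probability $1-(1-\epsilon)^n$) and then manufactures the seed edge \emph{dynamically}: in the non-bipartite case it propagates the non-defector state along a shortest path to a vertex on a shortest odd cycle ($\le\diam(G)$ rounds), then sends two copies around the cycle until they meet on an edge ($\le\diam(G)$ further rounds, using that the odd girth is at most $2\diam(G)+1$), and only then invokes the ball-growing argument you describe. The bipartite case is handled separately, needing a non-defector in each part. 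This is precisely where the factor $3$ in $3\diam(G)+1$ comes from, and why the paper needs the odd-girth lemma and a case split.

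Your route trades this graph-theoretic machinery for a direct probabilistic argument: a $\Delta$-regular graph has a matching of size $\Theta(n)$, and by independence across matching edges the probability that none of them lies in $N_0$ is $(1-\epsilon^2)^{\Theta(n)}$. This is cleaner and case-free, and it yields the tighter time bound $\diam(G)+1$; the price is a smaller constant $c$ (roughly $(1-\epsilon^2)^{-1/4}$ versus the paper's $(1-\epsilon)^{-1}$), which is harmless for the theorem as stated. One cosmetic remark: what you call a ``union bound'' over the matching is really an independence (product) bound, as your formula $(1-p^2)^{\Omega(n)}$ already reflects.
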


\begin{figure}[!ht]
    \centering
    \includegraphics[width=0.5\linewidth]{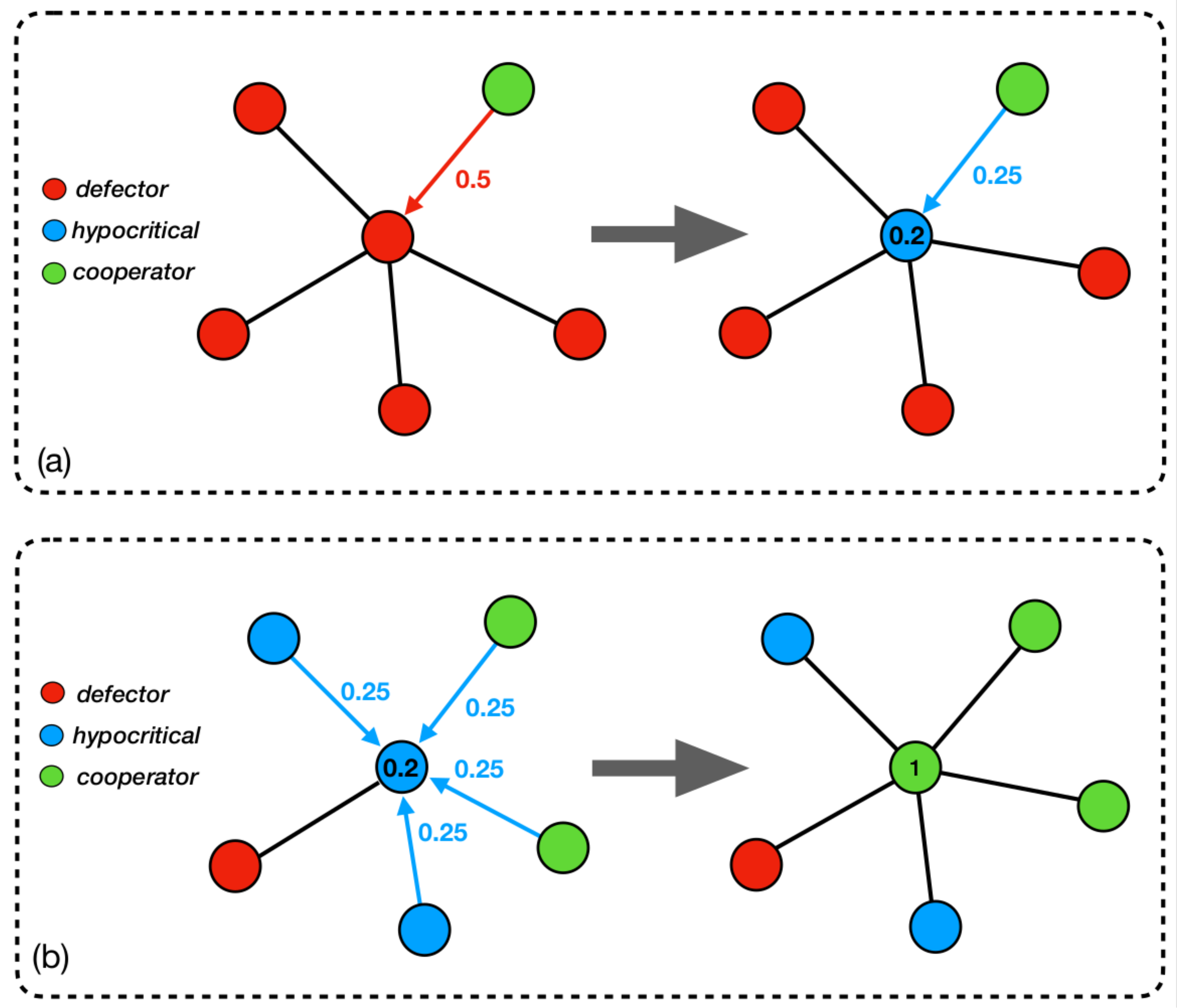}
        \caption{{\bf The two stages of the  dynamics.} 
        The direction of the red and blue arrows indicates the direction of the social-pressure applied on the player occupying the central vertex.  Cooperative players pay an energetic cost of $E_c=1$ and are immune to social-pressure. A defector player pays a social-pressure cost of $\rho_{d}=0.5$ per non-defector neighbor. A hypocritical player pays social-pressure cost of $\rho_{h}=0.25$ per non-defector neighbor, and an energetic cost of $E_h=0.2$.
        (a) First stage: defectors become hypocritical players. A defector player (central vertex on the left) has one non-defector neighbor (in this case, a cooperator), implying that its social-pressure cost is $\rho_{d}=0.5$. Therefore, that player would prefer to be hypocritical (right), paying only $0.25+0.2=0.45$. (b) Second stage: hypocritical players become cooperators. Here, a hypocritical player (central vertex on the left) is surrounded by four non-defector neighbors. In this case, the social-pressure accumulates to favor cooperation (right).}\label{fig:double-dynamics}
\end{figure}

The formal proof of Theorem~\ref{thm:main} appears in the SI,  Section~\ref{secSI:proofs}.
Intuitively, the main idea behind it is as follows. When the extent of social-pressure against hypocritical players is moderate, that is, when $\rho_h$ satisfies Eq.~\eqref{eq1}, the transition process can be divided into two stages. At the first stage, since the punishments of hypocritical players are sufficiently lower than those of defectors, specifically, $\rho_{h}<\rho_{d}-E_h$, or equivalently $\rho_{h}+E_h<\rho_{d}$, the presence of at least one neighboring non-defector $u$ makes a hypocritical player pay less than a defector. In this case, $u$'s neighbors  would become non-defectors at the next round (Figure~\ref{fig:double-dynamics}a). Although this does not necessarily imply that $u$ itself remains a non-defector in the next round, it is nevertheless possible to show that the proportion of hypocritical players gradually increases on the expense of defectors. 
Note that at this point, the social welfare may still remain low, since hypocritical players hardly contribute to it. However, the abundance of non-defectors increases the overall social-pressure in the system. In particular, since the social-pressure on hypocritical players is also not too mild, specifically $(1-E_h)/{\Delta}<\rho_{h}$, or equivalently $1<\rho_{h}\Delta+E_h$, the presence of many neighboring non-defectors can magnify it up to the point that the total cost incurred by a hypocritical player surpasses the energetic cost of being a cooperator (Figure~\ref{fig:double-dynamics}b). At this second stage, cooperators prevail over both defectors and hypocritical players, and so the system converges to a cooperative configuration. 

Conversely, severely punishing hypocritical players diminishes the prevalence of such players,  
preventing the system from escaping the initial degenerate configuration. Contrariwise, incurring too mild social-pressure towards hypocritical players would prevent the second stage of the dynamics. In particular, 
if $\rho_h < (1-E_h)/\Delta$, or equivalently, if $E_h + \rho_h \Delta < 1$, then a player would always prefer to be hypocritical over being cooperative  (even when all its neighbors induce social-pressure). In this case, the system would remain degenerative since the population would consist of a combination of defectors and hypocritical players.

To illustrate the dynamics
we conducted simulations  over  
two types of networks: A two-dimensional torus grid, and random 10-regular networks. 
Figures~\ref{fig:grid_evolution} (grid) and~\ref{fig:regular_evolution} (random $10$-regular networks) show how the population evolves over time, where the parameters taken satisfy Eq.~\eqref{eq1}. The role of hypocritical behavior as a transitory state, essential to achieving cooperation, is well illustrated by the initial peak of hypocritical players, preceding the rise of cooperative players. Moreover, if hypocritical behavior is disabled (see Methods), then the system is unable to escape the defective state (insets). 

Figures~\ref{fig:grid_eh_rhoh} (grid) and~\ref{fig:regular_eh_rhoh} (random $10$-regular networks) show the steady-state configuration, when hypocritical players experience different levels of energetic cost ($E_h$) and social-pressure ($\rho_h$). 
 For small values of $\rho_h$ and $E_h$, hypocritical behavior is, unsurprisingly, dominant: punishments deter  defectors, but are  insufficient to incentivize cooperation.  For moderate values of $E_h$, this phenomenon changes when $\rho_h$ enters the range specified in Theorem~\ref{thm:main}. Then, when $\rho_h$ increases further, the system remain defective.

As it turns out, the convergences we see in Figures~\ref{fig:grid_eh_rhoh} and~\ref{fig:regular_eh_rhoh} are very strong, in the sense that almost all players have the same behavior at steady-state. This unrealistic outcome is a consequence of several simplifying assumptions, including the fact that players behave in a fully greedy fashion while having perfect knowledge regarding their costs. Indeed, we also simulated a more noisy variant of our model, in which each player chooses the behavior that minimizes its cost with probability $0.95$, and otherwise chooses a behavior uniformly at random. This relaxed model yields more mixed populations at steady-state (Figures~\ref{fig:grid_eh_rhoh_EG} and~\ref{fig:regular_eh_rhoh_EG}). Observe that the necessity of the condition $\rho_h > (1-E_h)/\Delta$ to the emergence of cooperation is still respected. However, the other condition, namely, $\rho_h < \rho_d - E_h$ appears to be more sensitive to randomness. Indeed, for random $\Delta$-regular graphs, 
cooperation emerges also for larger values of $\rho_h$.

\begin{figure*}[htbp]
\begin{subfigure}[t]{.38\textwidth}
  \centering
  \includegraphics[width=1.\linewidth]{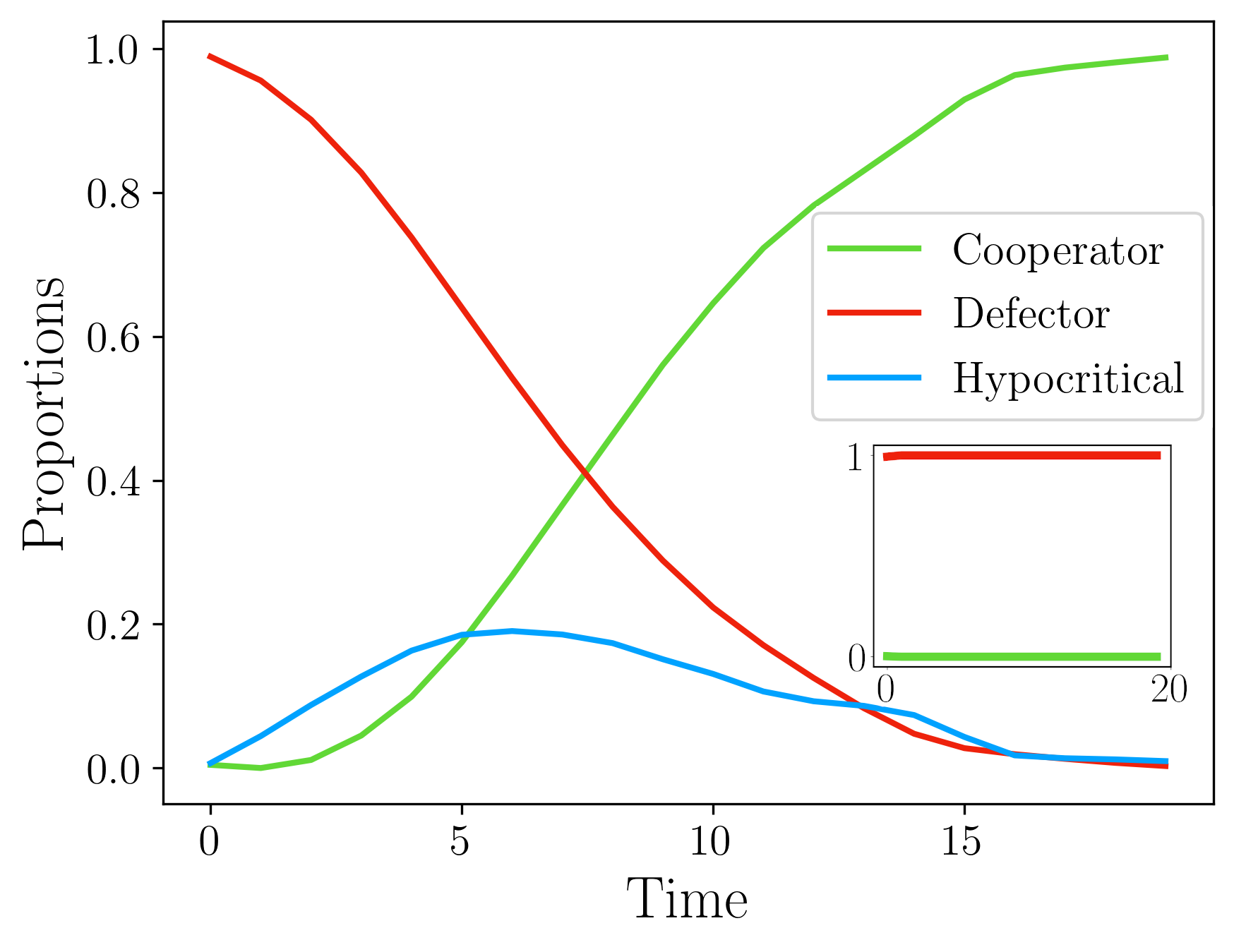}  
  \caption{Time evolution on a $50\times 50$  grid}
  \label{fig:grid_evolution}
\end{subfigure}
\hfill
\begin{subfigure}[t]{.38\textwidth}
  \centering
  \includegraphics[width=1.\linewidth]{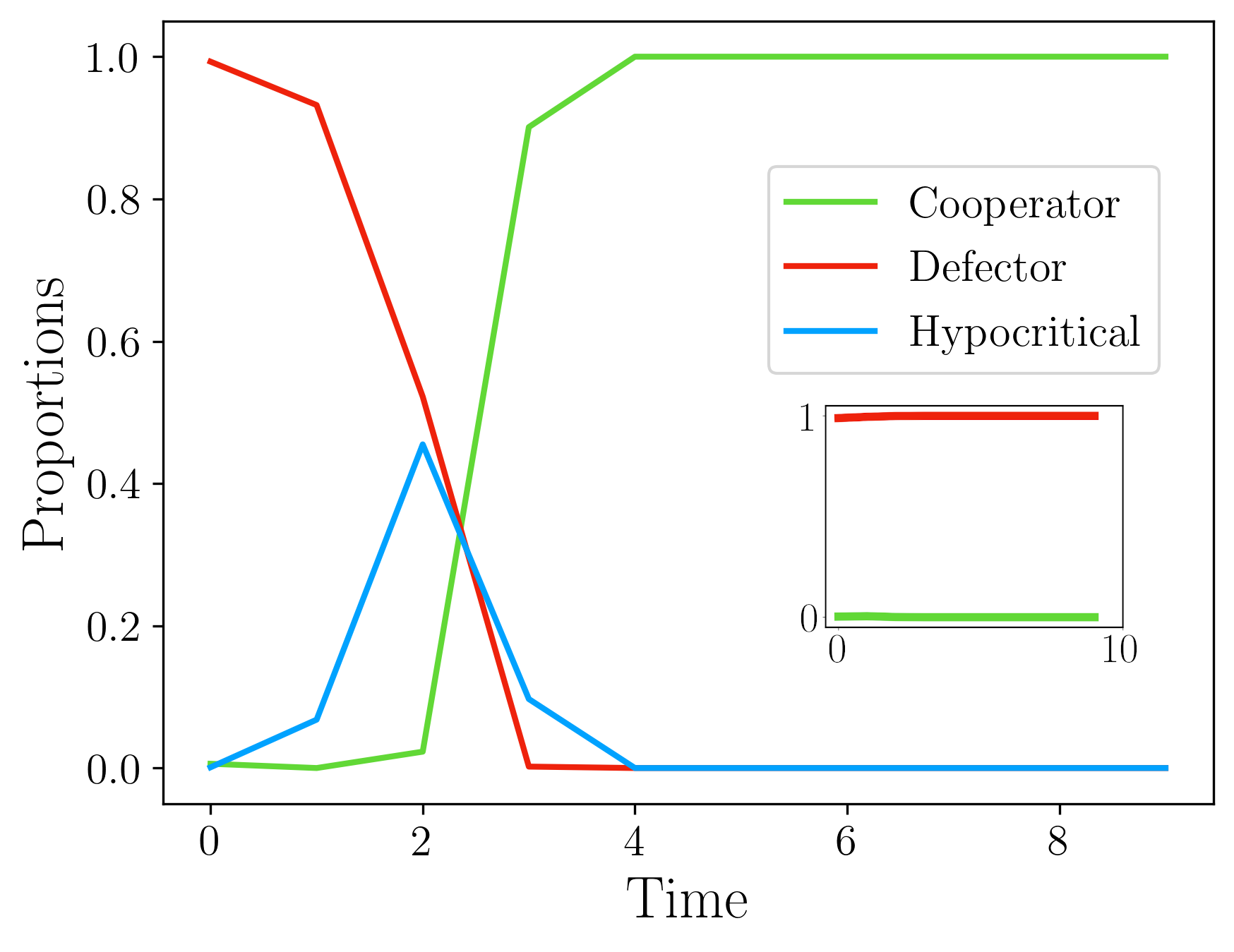}  
  \caption{Time evolution on a $10$-regular network}
  \label{fig:regular_evolution}
\end{subfigure}

\begin{subfigure}[t]{.45\textwidth}
  \centering
  \includegraphics[width=1.\linewidth]{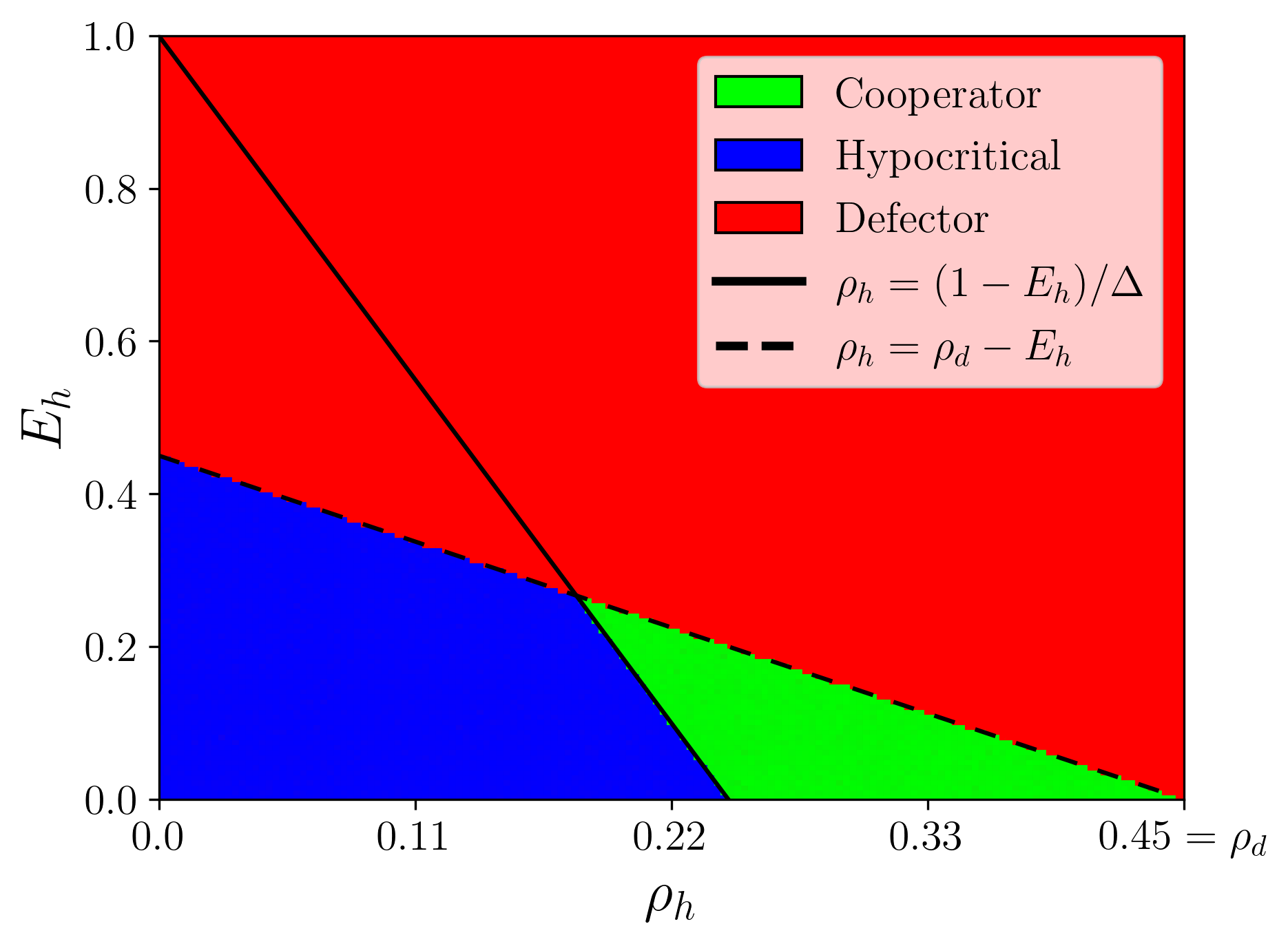}  
  \caption{Cooperation level on a $50\times 50$ grid}
  \label{fig:grid_eh_rhoh}
\end{subfigure}
\hfill
\begin{subfigure}[t]{.45\textwidth}
  \centering
  \includegraphics[width=1.\linewidth]{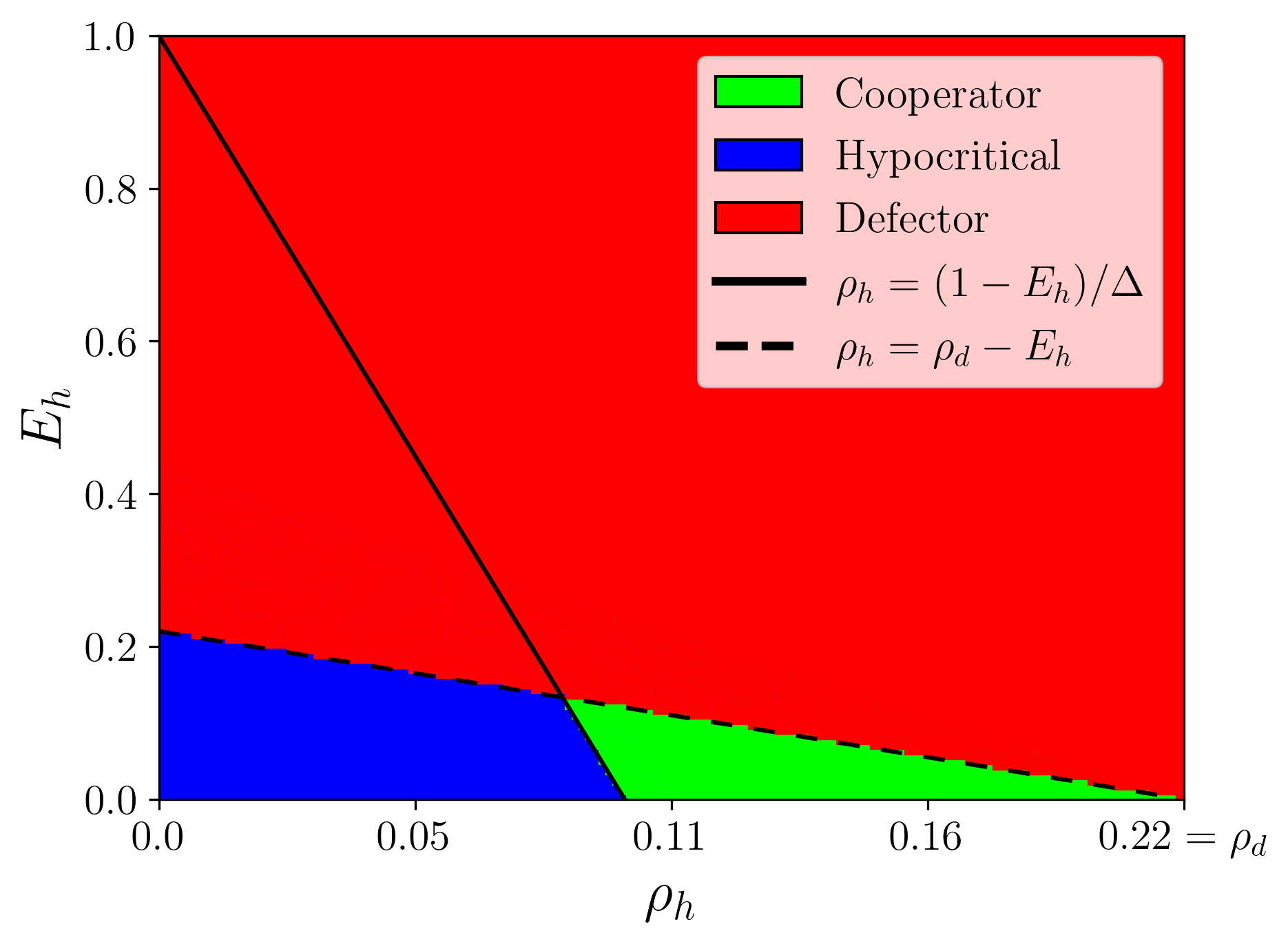}  
  \caption{Cooperation level on 10-regular networks}
  \label{fig:regular_eh_rhoh}
\end{subfigure}

\begin{subfigure}[t]{.45\textwidth}
  \centering
  \includegraphics[width=1.\linewidth]{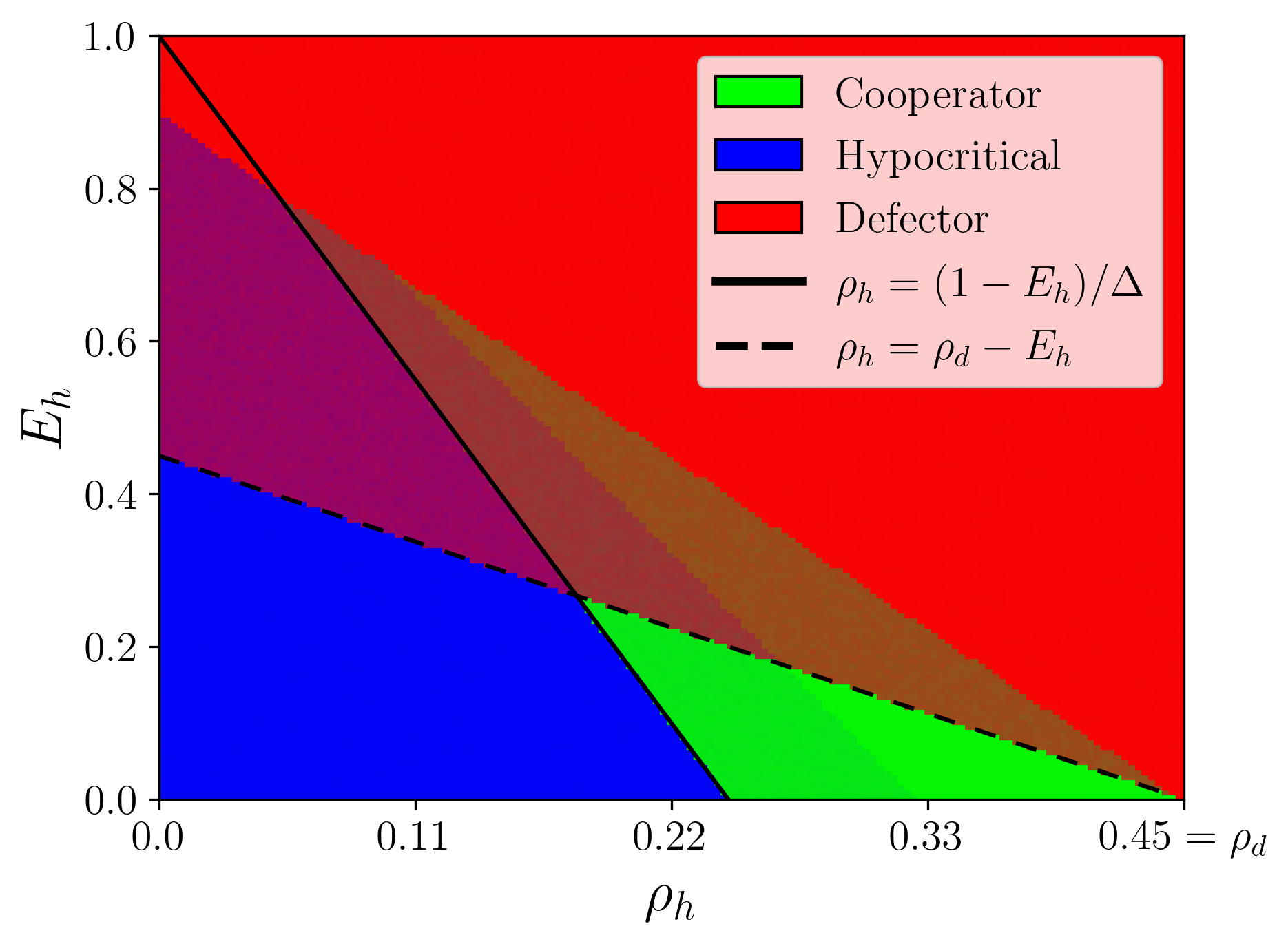}  
  \caption{Relaxed model on a $50\times 50$ grid}
  \label{fig:grid_eh_rhoh_EG}
\end{subfigure}
\hfill
\begin{subfigure}[t]{.45\textwidth}
  \centering
  \includegraphics[width=1.\linewidth]{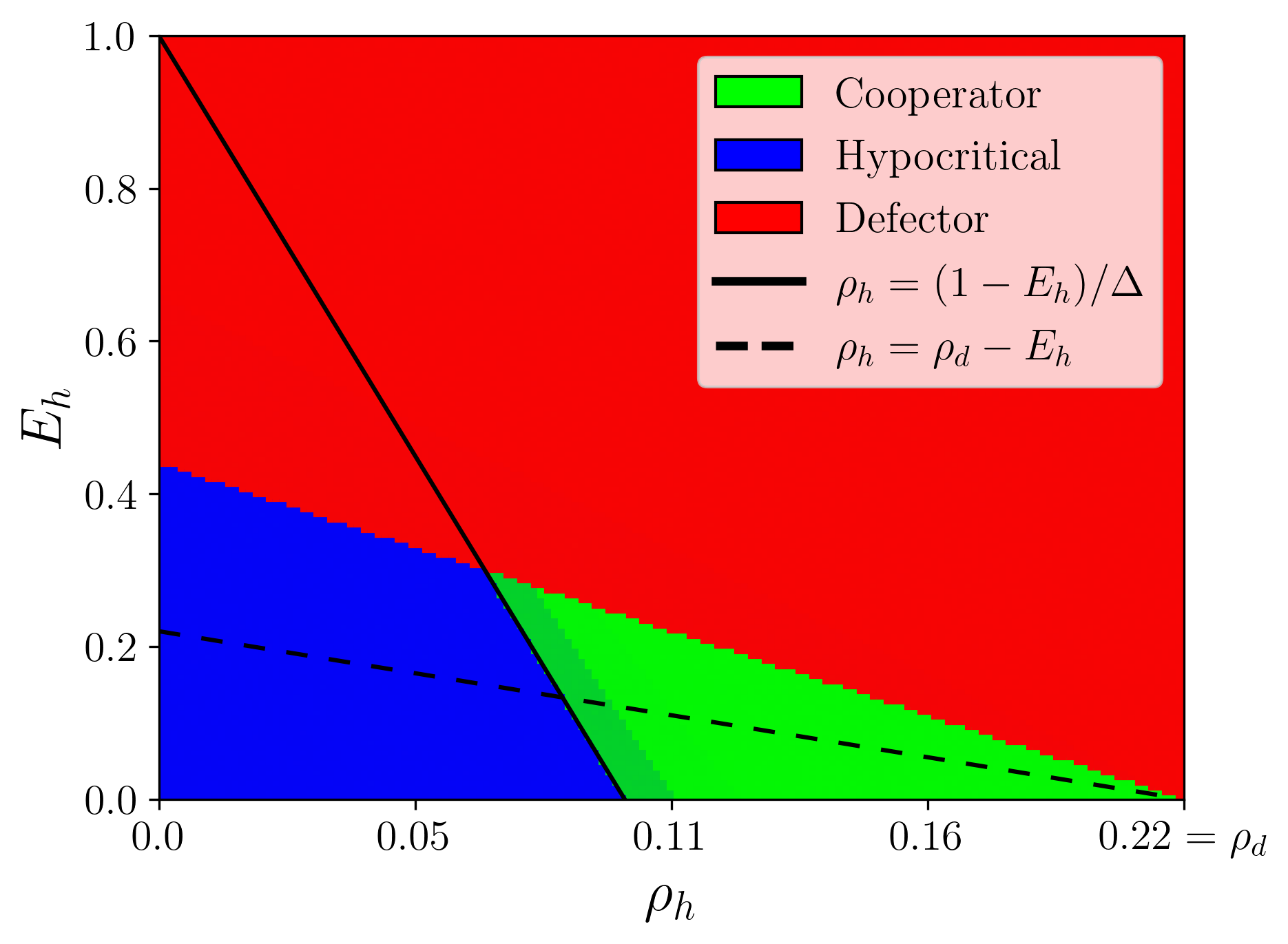}  
  \caption{Relaxed model on 10-regular networks}
  \label{fig:regular_eh_rhoh_EG}
\end{subfigure}

\caption{{\bf Emergence of cooperation in grids and random 10-regular networks.} Figures (a), (c) and (e) correspond to a $50\times 50$ grid network, and Figures (b), (d) and (f) correspond to random 10-regular networks with $1000$ vertices. All simulations start with a configuration in which $99\%$ of players are defectors.
Figures (a) and (b) show how the population evolves over time (number of rounds).  The chosen parameters satisfy the assumption in Eq.~\eqref{eq1} (see Methods). The insets show the population's evolution when hypocritical behavior is not available to the agents. 
Figures (c) and (d) depict the steady-state behavior, for different levels of $\rho_h$, which is the parameter quantifying the social-pressure towards hypocrisy. 
Figures (e) and (f) are similar to Figures (c) and (d), except that we relaxed the greediness assumption in the decision making process, allowing for some ``irrationality''. See Methods for more details.}
\label{fig:evolution}
\end{figure*}

\subsection*{A Generalized Model with Costly Punishments}
\newcommand{\spc}{\beta_1}
\newcommand{\spp}{\beta_2}

We next describe a different, more general model, termed the {\em two-order model}, that includes costly punishments. We then show how the second-order problem is solved in this model for a certain regime of parameters. 

In the two-order model, similarly to the main model, players are organized over a connected network $G$. A {\em behavior} for Player~$u$ is defined as a couple of indicator functions $(\chi_1(u),\chi_2(u))$, with the convention that $\chi_1(u) = 1$ if~$u$  cooperates on the first-order (and $0$ if it defects), and $\chi_2(u) = 1$ if~$u$  cooperates on the second-order (and $0$ if it defects).

The cost incurred by a player is  divided into two components. We denote by $\alpha_1 > 0$ the cost associated with first-order cooperation (this is analogues to the energetic cost in the main model), while $\alpha_2 > 0$ refers to the cost of second-order cooperation, that is, the cost of incurring punishments.
A player~$u$ such that~$\chi_2(u) = 1$ induces a {\em social-pressure cost} on each of its neighbors, whenever these fail to cooperate, at any order. Similarly to the main model, the extent of this social-pressure may differ depending on whether it is applied against first-order defectors or second-order defectors. Specifically, we denote by $\beta_1$ the social-pressure cost paid by 
a first-order defector, and by $\beta_2$ the social-pressure cost paid by a second-order defector (fully defecting individuals pay both).
Formally, denoting by $\Delta_2(u)$ the number of neighbors of $u$ which are cooperating on the second-order, that is, $ \Delta_2(u) = |\{\mbox{$v$ is a neighbor of $u$}, \chi_2(v) = 1 \}| $, the total cost paid by~$u$ equals:
\begin{equation}
    {\cal{C}}(u) = \chi_1(u) \alpha_1 + \chi_2(u) \alpha_2 + (1-\chi_1(u)) \Delta_2(u) \spc + (1-\chi_2(u)) \Delta_2(u) \spp.
\end{equation}
Let us name each of the four behaviors, and recap their cost:
\begin{itemize}
    \item {\em a cooperator} ($\chi_1(u) = 1 , \chi_2(u) = 1$) pays $\alpha_1 + \alpha_2$,
    \item {\em a defector} ($\chi_1(u) = 0 , \chi_2(u) = 0$) pays $ \Delta_2(u) (\spc + \spp)$,
    \item {\em a hypocritical} ($\chi_1(u) = 0 , \chi_2(u) = 1$) pays $ \alpha_2 + \Delta_2(u) \spc$,
    \item {\em a private cooperator} ($\chi_1(u) = 1 , \chi_2(u) = 0$) pays $ \alpha_1 + \Delta_2(u) \spp$.
\end{itemize}
Similarly to the main model, the system starts in a configuration in which almost all players, e.g., $99\%$, are defectors (see Methods). The execution proceeds in discrete synchronous rounds. The costs of each player are evaluated at the beginning of each round, and then, before the next round starts, each player chooses a behavior that minimizes its cost (breaking ties randomly), given the current behavior of its neighbors.

The theorem below assumes that the underlying network is $\Delta$-regular. 
However, as in the case of Theorem \ref{thm:main}, the theorem can be generalized to arbitrary networks with minimal degree $\Delta$ (SI, Theorem \ref{thm:generalized_version}).
\begin{theorem}\label{thm:second}
Consider a $\Delta$-regular network $G$ with $n$ players undergoing the two-order model.  Assume that the following two conditions hold.
\begin{itemize}
    \item {\em Condition $(i)$} $\alpha_2 < \beta_2$, and
    \item {\em Condition $(ii)$} $\alpha_1 < \Delta \beta_1$.
\end{itemize}
Then, with probability at least $1-\frac{1}{c^n}$, for some constant $c>1$, in at most $3 \cdot \diam(G)+1$ rounds, the system will be in a configuration in which all players are cooperative, and will remain in this configuration forever.
\end{theorem}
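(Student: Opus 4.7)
The plan is to reduce the two-order dynamics, under conditions~(i) and~(ii), to the main-model dynamics of Theorem~\ref{thm:main}, by grouping ``private cooperators'' with defectors, and then to invoke the generalized version of Theorem~\ref{thm:main} (SI Theorem~\ref{thm:main-SI}).

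First I would analyze the four behavior costs of an arbitrary player $u$ as a function of $\Delta_2(u)$. When $\Delta_2(u) = 0$, the costs of cooperator, defector, hypocritical, and private cooperator are $\alpha_1+\alpha_2$, $0$, $\alpha_2$, and $\alpha_1$, so defector is the strict minimizer. When $\Delta_2(u) \geq 1$, condition~(i) gives $\alpha_2 < \Delta_2(u)\beta_2$, which is precisely the inequality that makes cooperator strictly cheaper than private cooperator, and simultaneously the one that makes hypocritical strictly cheaper than defector. Thus, private cooperator is never the unique minimizer, and any update yields a behavior in $\{\textnormal{cooperator},\textnormal{hypocritical},\textnormal{defector}\}$ (up to zero-probability ties). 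Moreover, a private cooperator is indistinguishable from a defector from the perspective of any other player's decision, since neither contributes to $\Delta_2(\cdot)$ of its neighbors.

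Combining these two observations, grouping ``private cooperators'' with ``defectors'' yields an exact reduction of the two-order dynamics to a three-behavior dynamics identical to the main model, under the identification $E_c = \alpha_1+\alpha_2$, $E_h = \alpha_2$, $\rho_d = \beta_1+\beta_2$, $\rho_h = \beta_1$, and $\Delta_2(u) \leftrightarrow \DeltaD(u)$. The main-model normalization $E_c = 1$ is recovered by rescaling all costs by $1/(\alpha_1+\alpha_2)$, which does not affect the dynamics. The initial configuration with $99\%$ two-order defectors corresponds to a main-model configuration with at least $99\%$ defectors, satisfying the hypothesis of Theorem~\ref{thm:main}. Moreover, the general-$E_c$ form of Eq.~\eqref{eq1}, namely $(E_c-E_h)/\Delta < \rho_h < \rho_d - E_h$, becomes $\alpha_1/\Delta < \beta_1 < \beta_1+\beta_2-\alpha_2$, which is precisely conditions~(ii) and~(i).

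Applying the generalized version of Theorem~\ref{thm:main} then yields convergence to the all-cooperator configuration in at most $3\cdot\diam(G)+1$ rounds with probability at least $1-1/c^n$. The stability of the cooperative fixed point is a direct check: when every $\Delta_2(u) = \Delta$, the cooperator's cost $\alpha_1+\alpha_2$ is strictly smaller than each of the other three by conditions~(i) and~(ii). The main obstacle I anticipate is in formalizing the grouping argument exactly at every round, in particular ensuring that no player drifts into the private cooperator behavior via tie-breaking even when cooperator would be equally cheap; this requires either a slight genericity assumption on the parameters (e.g., that $\alpha_2 \neq \Delta_2(u)\beta_2$ for all relevant integer values of $\Delta_2(u)$) or a direct argument that such ties occur with probability $0$ and therefore cannot derail the cascade of Theorem~\ref{thm:main}.
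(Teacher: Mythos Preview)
Your approach is essentially identical to the paper's: both reduce the two-order model to the main model by mapping private cooperators to defectors, identify the normalized parameters $E_h = \alpha_2/(\alpha_1+\alpha_2)$, $\rho_h = \beta_1/(\alpha_1+\alpha_2)$, $\rho_d = (\beta_1+\beta_2)/(\alpha_1+\alpha_2)$, verify that conditions~(i)--(ii) translate exactly to the two inequalities of Eq.~\eqref{eq1}, and invoke Theorem~\ref{thm:main-SI} on the mapped process (with $\epsilon' = 2\epsilon/3$ after absorbing private cooperators into defectors in the initial distribution). Your anticipated obstacle about tie-breaking is unnecessary: under condition~(i) the private-cooperator cost is \emph{strictly} larger than the cooperator cost whenever $\Delta_2(u)\geq 1$ (since $\alpha_2 < \beta_2 \leq \Delta_2(u)\beta_2$) and strictly larger than the defector cost when $\Delta_2(u)=0$ (since $\alpha_1>0$), so private cooperator is never even tied for the minimum and no genericity assumption is needed---this is exactly what the paper observes at the start of its proof.
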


\section*{Discussion}

This paper proposes a simple idealized network model that demonstrates how cooperation can emerge, even when the MPCR is zero, and even when the extent of social-pressure is low. Our results highlight the possible social role that might be played by hypocritical behavior in escaping the tragedy-of-the-commons. The main finding is that setting the level of social-pressure towards this behavior to be at a specific intermediate range allows to quickly transform an almost completely defective system into a fully cooperative one. Our model, like any model, neglects many of the real-life complexity parameters. Nevertheless, the insight we discovered sheds new light on the possibility of emergent cooperation. In particular, our results suggest that individuals who wish to influence others in the context of environmental preservation should rethink their relation to their hypocritical acquaintants.

\section*{Methods}

For two players $u$ and $v$ in $G$, let $d_G(u,v)$ denote the {\em distance} between $u$ and $v$, that is, the number edges on the shortest path linking $u$ to $v$ in $G$. The maximal distance between any pair of players, i.e., the {\em diameter}, is denoted by
$ \diam(G) = \max_{u,v \in G} d_G(u,v)$. 

The initial configuration is governed by a given fixed   $0<\epsilon<1$, which is independent from the underlying graph. In the main model, each player is initially set to be a defector with probability $1-\epsilon$, a hypocritical with probability $\epsilon/2$, and a cooperative with probability $\epsilon/2$. 
Similarly, in the two-order model, each player is initially chosen to be a defector, with probability $1-\epsilon$, and, otherwise, with probability $\epsilon$ it chooses one of the three remaining behaviors with equal probability, i.e., $\epsilon/3$. We think of $\epsilon$ as very small; for example, in each of our simulations we take $\epsilon=0.01$, which means that initially,  $99\%$ of the population are defectors, $0.5\%$ are hypocritical, and $0.5\%$ are cooperators. 

We simulated the dynamics of the main model
using the C++ language. Figured were obtained using the Python library ``Matplotlib''.
In Figures~\ref{fig:grid_evolution} and \ref{fig:grid_eh_rhoh} we used a $50 \times 50$, $4$-regular, torus grid.
In Figures~\ref{fig:regular_evolution} and \ref{fig:regular_eh_rhoh} we used random 10-regular networks with $1000$ vertices. To sample such a network, we gradually increased the number of edges, by pairing the vertices of degree less than $10$ uniformly at random, until it became not possible anymore; then we discarded the few ``left-overs'' if necessary. As a consequence, the sampled networks have sometimes slightly less than $1000$ vertices, but are always 10-regular by construction.

When running the time-simulations on the grid in Figure~\ref{fig:grid_evolution}, we took $E_h = 0.1$, $\rho_d = 0.45$, and $\rho_h = 0.23$.
In Figure~\ref{fig:regular_evolution}, the time-simulation was executed on a single random $10$-regular network, using the parameters $E_h = 0.1$, $\rho_d = 0.22$, and $\rho_h = 0.11$. For both cases  these parameters  satisfy the constraints in~Eq.~\eqref{eq1}.
The insets show the evolution of the population when hypocritical behavior is disabled. This means that each player must choose between cooperating and defecting only, and that in the initial configuration, each player is a defector with probability $1-\epsilon$, and a cooperator with probability $\epsilon$. The setting remains otherwise unchanged.

In both Figure~\ref{fig:grid_eh_rhoh} and Figure~\ref{fig:regular_eh_rhoh}, the results of the simulations are presented for 150 values of $E_h$ and 150 values of $\rho_h$, with $E_h \in [ 0,1 ]$ and $\rho_h \in [0,\rho_d]$. For each couple $(E_h,\rho_h)$, a pixel is drawn at the appropriate location, whose RGB color code corresponds to the proportions of defectors (red), cooperators (green), and hypocritical players (blue) in steady-state -- that is, after T rounds. These proportions have been averaged over N repetitions, with each time a new starting configuration, and, in the case of 10-regular networks (Figure~\ref{fig:regular_eh_rhoh}), a new network. For the grid, we set $T=20, N=10$, whereas for the 10-regular networks, we took $T=10, N=100$. 

Figure~\ref{fig:grid_eh_rhoh_EG} and Figure~\ref{fig:regular_eh_rhoh_EG} are obtained in the same way as Figures~\ref{fig:grid_eh_rhoh} and~\ref{fig:regular_eh_rhoh}, respectively, except that players do not choose greedily their behavior for the next round. Instead, at each round, each player chooses a behavior that minimizes its cost (breaking ties randomly) with probability $0.95$, and otherwise chooses a behavior uniformly at random.

\paragraph{} All the experiments mentioned in this paper are numerical simulations. Specifically, they do not involve any real participant.

\subsection*{Acknowledgments} The authors would like to thank Yannick Viossat, Pierre Fraigniaud, and Ofer Feinerman for helpful discussions. This work has received funding from the European Research Council (ERC) under the European Union's Horizon 2020 research and innovation program (grant  agreement No 648032).

This is a preprint of an article published in Scientific Reports. The final authenticated version is available online at:  \url{https://doi.org/10.1038/s41598-021-97001-3}.

\bibliographystyle{unsrt}
\bibliography{ref}

\clearpage
\setcounter{page}{1}
\appendix
\centerline{\Huge Supplementary Information}

\section{Preliminaries} \label{secSI:preliminary}
\subsection{Definitions}
Let $G$ be a connected, undirected network. In the context of our models, we often refer to the vertices of $G$ as players. Given a player~$u$, we write $N(u)$ the set of {\em neighbors} of~$u$. Similarly, given a subset~$A$ of players, we write~$N(A)$ the set of neighbors of~$A$, that is
\[N(A) = \bigcup_{u \in A} N(u). \]
The {\em degree} of a player $u$ is the number of its neighbors, that is $|N(u)|$. We say that $G$ has minimal degree $\Delta$ if every player has degree at least $\Delta$. 
A network is called {\em $\Delta$-regular} if all the vertices have degree precisely~$\Delta$.

For two players $u$ and $v$ in $G$, let $d_G(u,v)$ denote the {\em distance} between $u$ and $v$, that is, the number edges on the shortest path linking $u$ to $v$ in $G$. Similarly, given a subset~$A$ of players, we write~$d_G(u,A)$ the distance between $u$ and $A$, that is
$$ d_G(u,A) = \min_{v \in A} d_G(u,v). $$
The {\em diameter} of $G$, is
$$ \diam(G) = \max_{u,v \in G} d_G(u,v).$$

 A {\em bipartite network} is a network $G$ whose set of vertices can be divided into two disjoint  sets $U$ and $V$, such that every edge connects a player in $U$ to a player in $V$. It is a well-known fact that a network is bipartite network if and only if it does not contain any odd-length cycles \cite{diestel2005graph}. 

\subsection{A result in graph theory}
The following lemma (mentioned also in \cite{mohammadian2016generalization}) appears to be a basic result  in graph theory, however, we could not find a formal proof for it. We therefore provide a proof here for the sake of completeness.
\begin{lemma} \label{lem:odd-girth}
    The shortest odd-length cycle of any non-bipartite network~$G$ is of length at most $2\diam(G)+1$.
\end{lemma}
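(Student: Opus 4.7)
The plan is to argue by contradiction: suppose the shortest odd cycle $C$ of $G$ has length $L = 2k+1$ with $L > 2\diam(G)+1$, equivalently $k > D$, where $D := \diam(G)$. I would label $C$ as $v_0,v_1,\ldots,v_{2k},v_0$ and focus on the pair of ``roughly antipodal'' vertices $v_0$ and $v_k$. The two arcs of $C$ joining them have lengths $k$ and $k+1$, so they have opposite parities; exactly one of them is odd.

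Next, I would take a shortest path $P$ in $G$ from $v_0$ to $v_k$. By the definition of the diameter, $P$ has length $\ell \leq D$, and the assumption $k > D$ gives $\ell < k$. Concatenating $P$ with each of the two arcs of $C$ produces two closed walks, of lengths $\ell + k$ and $\ell + k + 1$. Exactly one of these has odd length, and that odd closed walk has length at most $\ell + k + 1 \leq D + k + 1 < 2k + 1 = L$.

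To finish, I would invoke the standard graph-theoretic fact that \emph{every closed walk of odd length contains an odd cycle}. The short proof of this is by induction on the length of the walk: if the walk is already a simple cycle, we are done; otherwise some vertex is repeated, which splits the walk into two strictly shorter closed walks whose lengths sum to an odd number, so one of them is itself odd and contains an odd cycle by induction. Applied to our odd closed walk of length strictly less than $L$, this yields an odd cycle of length at most $L-1 < L$, contradicting the minimality of $C$.

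The argument is essentially one paragraph of graph theory, and the only mildly delicate point is the last step: the concatenation of $P$ with an arc of $C$ is a \emph{walk}, not necessarily a cycle (it may repeat vertices or even edges if $P$ happens to share portions with $C$), so one must legitimately extract a cycle from an odd closed walk. That is precisely what the auxiliary induction above provides, and I expect this to be the main (minor) obstacle to write up carefully; everything else is a direct parity and length count.
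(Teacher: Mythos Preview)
Your proof is correct. The key steps---taking roughly antipodal vertices on the minimal odd cycle, joining them by a geodesic of length at most $\diam(G)$, and then extracting an odd cycle from the resulting odd closed walk---all go through; your bound $\ell + k + 1 \le D + k + 1 \le 2k < L$ is fine since $k > D$ forces $D \le k-1$.

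The paper takes a different, more structural route. Instead of allowing the shortcut path $P$ to be arbitrary and then invoking the ``odd closed walk contains an odd cycle'' lemma, the paper proves (via two auxiliary claims) that the shortest odd cycle $C$ is in fact \emph{isometrically embedded} in $G$: for any two vertices $u_i,u_j$ on $C$, one has $d_G(u_i,u_j)=d_C(u_i,u_j)$. This is achieved by first showing that any shortest $G$-path between two non-adjacent vertices of $C$ must hit a third vertex of $C$ (otherwise the shortcut path, having no internal vertices on $C$, would combine with an arc of $C$ to form a genuine shorter odd \emph{cycle}, not just a walk), and then bootstrapping this by induction on $d_C$. Once $C$ is isometric, the conclusion $k=d_C(u_1,u_{k+1})=d_G(u_1,u_{k+1})\le\diam(G)$ is immediate.

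Your argument is shorter and more direct; the only extra ingredient is the standard walk-to-cycle extraction, which you handle correctly. The paper's argument avoids that extraction entirely by arranging the shortcut to be internally disjoint from $C$, and in exchange it yields the stronger intermediate fact that the minimal odd cycle is geodesic---a fact not needed elsewhere in the paper, but pleasant in its own right.
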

\begin{proof}
Consider a non-bipartite network~$G$. Such a network necessarily has an odd-length cycle. Let $2k+1$ be the shortest length among the odd-length cycles in $G$, where $k$ is an integer, and let $C = (u_1,\ldots,u_{2k+1})$ be such a cycle.
    
    \begin{claim} \label{claim:graph_aux1}
        For every $i,j \in \{ 1,\ldots, 2k+1 \}$ such that $d_C(u_i,u_j) \geq 2$, there exist $\ell \neq i,j$ and a shortest path $P$ between $u_i$ and $u_j$ such that $P$ contains $u_\ell$. 
    \end{claim}
    \begin{proof} [Proof of Claim~\ref{claim:graph_aux1}]
        Assume by way of contradiction that we can find $i < j$ such that no shortest path between $u_i$ and $u_j$ has any intermediate vertex among $\{ u_1,\ldots,u_{2k+1} \}$. Up to re-indexing the vertices of the cycle, we can assume that $j-i \leq k$. Let $(u_i = v_1,v_2,\ldots,v_s,v_{s+1}=u_j)$ be a shortest path between $u_i$ and $u_j$. By assumption, $\{ v_2,\ldots,v_s \} \cap \{ u_1,\ldots, u_{2k+1} \} = \emptyset$, and $s < j-i$ (otherwise $(u_i,u_{i+1},\ldots,u_{j-1},u_j)$ is a shortest path). Consider two cases:
\begin{itemize}
    \item 
        If $s$ and $j-i$ have different parities, then $s+j-i$ is odd. Moreover, $s+j-i \leq 2(j-i) \leq 2k$, so
        $$ (v_1 = u_i,u_{i+1},\ldots,u_{j-1},u_j = v_{s+1},v_s, \ldots,v_2) $$
        is an odd-length cycle shorter than $C$, which is a contradiction.
      \item  
        If $s$ and $j-i$ have the same parity, then $2k+1 + s-(j-i)$ is odd. Moreover, $2k+1 + s-(j-i) < 2k+1$, so
        $$ (u_i = v_1,v_2,\ldots,v_s,v_{s+1} = u_j,u_{j+1}, \ldots,u_{2k+1},u_1, \ldots,u_{i-1}) $$
        is again an odd-length cycle shorter than $C$, which is a contradiction. \end{itemize} 
        This concludes the proof of Claim~\ref{claim:graph_aux1}.
    \end{proof}
    
    \begin{claim} \label{claim:graph_aux2}
        For every $i,j \in \{ 1,\ldots, 2k+1 \}$, there exist a shortest path $P$ between $u_i$ and $u_j$ such that $P$ contains only vertices of $C$ -- in other words, $d_C(u_i,u_j) = d_G(u_i,u_j)$.
    \end{claim}
    \begin{proof} [Proof of Claim~\ref{claim:graph_aux2}]
        We prove the claim by induction on $d_C(u_i,u_j)$, the distance between $u_i$ and $u_j$ in $C$. When $d_C(u_i,u_j) = 1$, $(u_i,u_j)$ is a path of length $1$ between $u_i$ and $u_j$ containing only vertices of $C$. Next, let us assume that the claim holds for every pair of vertices whose distance in $C$ is at most $1\leq d\leq k$. Consider $i$ and $j$ such that $d_C(u_i,u_j) = d+1$. By Claim~\ref{claim:graph_aux1}, we can find $\ell$ and a shortest path $P$ between $u_i$ and $u_j$ such that $P$ contains $u_\ell$. By the induction hypothesis, we can find shortest paths $P_1$ between $u_i$ and $u_\ell$, and $P_2$  between $u_\ell$ and $u_j$, such that $P_1$ and $P_2$ contain only vertices of $C$. By merging $P_1$ and $P_2$, we obtain a shortest path between $u_i$ and $u_j$ containing only vertices of $C$, which establishes the induction step. This concludes the proof of Claim~\ref{claim:graph_aux2}.
    \end{proof}
    By Claim~\ref{claim:graph_aux2}, $k = d_C(u_1,u_{k+1}) = d_G(u_1,u_{k+1}) \leq \diam(G)$, where the last inequality is by the definition of  diameter. Hence, $2k+1 \leq 2\diam(G)+1$.  This concludes the proof of Lemma \ref{lem:odd-girth}.
    \end{proof}
\section{Proof of Theorem \ref{thm:main}} \label{secSI:proofs}
The goal of this section is to prove Theorem \ref{thm:main}. In fact, we prove the more general theorem below. 
\begin{theorem}\label{thm:main-SI}
Consider a network $G$ with $n$ players and minimal degree $\Delta$. Assume that the following conditions hold. 
\begin{itemize}
    \item {\em Condition $(i)$} $E_h+\rho_{h}<\rho_{d}$, and
    \item {\em Condition $(ii)$} $E_h+\rho_{h}\cdot\Delta>1$.
\end{itemize}
Then, for some constant $c>1$ (that depends only on $\epsilon$ and not on~$G$) the following holds.
\begin{itemize}
    \item 
If $G$ is not bipartite then  with probability at least $1-\frac{1}{c^n}$, in at most $3 \cdot \diam(G) + 1$ rounds, the system will be in a configuration in which all players are cooperative, and will remain in this configuration forever.  
\item 
If $G$ is bipartite and $\Delta$-regular  then with probability at least $1-\frac{1}{c^n}$, in at most $\diam(G) + 1$ rounds, the system will be in a configuration in which all players are cooperative, and will remain in this configuration forever. 
\item
If $G$ is bipartite then with probability at least $1-\frac{1}{c^\Delta}$, in at most $\diam(G) + 1$ rounds, the system will be in a configuration in which all players are cooperative, and will remain in this configuration forever.
\end{itemize}
\end{theorem}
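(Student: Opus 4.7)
My plan is to reduce the three-behavior dynamics to a simple cellular-automaton evolution on the set of non-defectors, and then analyze this automaton via walk-counting, invoking Lemma~\ref{lem:odd-girth} to handle parity issues in the non-bipartite case. A case analysis on $\DeltaD(u)$ shows that under the hypotheses a player's next-round behavior is entirely determined by how many of its neighbors are currently non-defectors: condition~(i), rewritten as $E_h + k\rho_h < k\rho_d$ for every integer $k \geq 1$, ensures that hypocritical strictly beats defector whenever $\DeltaD(u) \geq 1$, and condition~(ii) ensures cooperator strictly beats hypocritical whenever $\DeltaD(u) \geq \Delta$. Letting $A_t$ denote the set of non-defectors at round $t$, this yields the update rule $A_{t+1} = N(A_t)$, and the same inequalities easily imply that the all-cooperator configuration is a stable absorbing state.

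Next, I would introduce the auxiliary set $B_t := A_t \cap A_{t+1}$ and show it is monotonically non-decreasing in $t$. Indeed, if $u \in B_t$, then $u \in A_{t+1}$ supplies a neighbor $v \in A_t$; since $u \in A_t$, the vertex $v$ itself has a neighbor in $A_t$ (namely $u$), so $v \in A_{t+1}$; and therefore $u \in A_{t+2}$, giving $u \in B_{t+1}$. As soon as $B_t = V(G)$, we have $A_{t+1} = V(G)$, which by condition~(ii) triggers full cooperation at round $t+2$. The problem thereby reduces to bounding the first time $t^\star$ at which $B_{t^\star} = V(G)$.

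The key combinatorial observation is that $u \in A_t$ iff there exists a walk of length exactly $t$ from $u$ to some vertex of $A_0$; hence $u \in B_t$ iff walks to $A_0$ of both parities exist (of lengths $t$ and $t+1$). In the non-bipartite case, Lemma~\ref{lem:odd-girth} produces an odd cycle $C$ of length $L \leq 2\diam(G)+1$, so that for every vertex $u$, combining a shortest path to a well-chosen $v \in A_0$ with a traversal of $C$ yields walks to $A_0$ of both parities of length at most $3\diam(G)+1$. In the bipartite cases odd cycles are unavailable, and the argument instead reduces to showing that $A_0$ meets both parts of the bipartition. Chernoff-type tail bounds on the random initial configuration then deliver the announced probabilities: since minimum degree $\Delta$ forces each part of a bipartite $G$ to have size at least $\Delta$, the event that $A_0$ meets both parts has probability at least $1 - 2(1-\epsilon)^\Delta$, giving the $1-1/c^\Delta$ bound; when both parts have size $n/2$ (the $\Delta$-regular bipartite case) this strengthens to $1 - 1/c^n$; and in the non-bipartite case, $|A_0| = \Omega(n)$ with probability $1-e^{-\Omega(n)}$ suffices to place $A_0$ favorably relative to the odd cycle.

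The main obstacle will be executing the walk construction sharply enough to match the exact round count $3\diam(G)+1$, rather than a looser $O(\diam(G))$: one must carefully handle the edge case where $A_0$ happens to be far from every short odd cycle, leveraging the density (and not merely the non-emptiness) of the random $A_0$ so that walks of both parities to $A_0$ fit within the required budget for every vertex $u$ simultaneously.
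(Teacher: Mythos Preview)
Your reduction to the automaton $A_{t+1}=N(A_t)$, the walk characterization, and the bipartite analysis all match the paper's argument (the paper phrases the growth via auxiliary sets $U_{j+1}=N(U_j)$ seeded at a pair of adjacent non-defectors, which is equivalent to your $B_t$ device). The probability bounds for the two bipartite items are also what the paper uses.

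The gap is in the non-bipartite item. Your plan is to bound, for \emph{every} vertex $u$, the lengths of walks of both parities from $u$ to $A_0$, and you anticipate needing the \emph{density} of $A_0$ (via Chernoff) to fit these walks inside $3\diam(G)$. This is both unnecessary and insufficient. Unnecessary, because the paper obtains the exact bound $3\diam(G)+1$ assuming only $A_0\neq\emptyset$ (an event of probability $1-(1-\epsilon)^n$, no concentration needed). Insufficient, because the naive per-vertex walk bound --- shortest path to $A_0$ plus an odd detour through the short odd cycle --- only gives $\max(e(u),o(u))\le 3\diam(G)+1$, hence $A_{3\diam(G)+1}=V$ and cooperation at round $3\diam(G)+2$, one round too late; density of $A_0$ does not repair this, since $A_0$ need not come anywhere near the short odd cycle.

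The missing step is that your set $B_t$ is not just monotone but also \emph{expanding}: the same two-line argument you gave for $B_t\subseteq B_{t+1}$ also yields $N(B_t)\subseteq B_{t+1}$ (if $w\in N(u)$ with $u\in B_t$, then $u\in A_t$ gives $w\in A_{t+1}$ and $u\in A_{t+1}$ gives $w\in A_{t+2}$). With this in hand, you do not bound walks for all $u$; instead you show $B_{t_0}\neq\emptyset$ for some $t_0\le 2\diam(G)$ and then let $B$ fill $V(G)$ in $\diam(G)$ further steps. Concretely: take any $v_0\in A_0$, walk to a vertex $u_1$ of the short odd cycle $(u_1,\dots,u_{2k+1})$ in $s\le\diam(G)$ steps (so $u_1\in A_s$), then propagate both ways around the cycle to get $u_{k+1},u_{k+2}\in A_{s+k}$; these are adjacent, so $u_{k+1}\in B_{s+k}$ with $s+k\le 2\diam(G)$ by Lemma~\ref{lem:odd-girth}. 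This is exactly the paper's route and gives the sharp $3\diam(G)+1$.
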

Before we prove Theorem \ref{thm:main-SI} we note that in the third item, the probability bound of $1-\frac{1}{c^\Delta}$ is tight for bipartite graphs, up to replacing $c$ with another constant. Indeed, consider the bipartite graph which is constructed by having $\Delta$ players in $U$, each of which is connected to each of the remaining $n-\Delta$ players in $V$. Then, with probability  $\frac{1}{c^\Delta}$, for some constant $c$, all players in $U$ are defectors initially. In this case, it is possible to show that, regardless of the relationships between $\rho_d$, $\rho_h$ and $E_h$, but as long as being a defector is the best choice when all neighbors are defectors, the system keeps alternating forever, so that on even rounds all players in $U$ are defectors, and on odd rounds all players in $V$ are defectors.

\begin{proof}[Proof of Theorem \ref{thm:main-SI}]
    We start with defining  $\D_t$ as the set of non-defector players at round~$t$.
    The following lemma describes the propagation of the non-defector state in the network.
    It says that a player $u$ is a non-defector at round $t+1$ if and only if at least one of its neighbors $v$ is a non-defector in round $t$. Note, however, that this does not imply that the neighbor $v$ remains a non-defector in the next round as well.

\begin{lemma} \label{lem:contagion}
    Under Condition {\em (i)}, $\D_{t+1} = N(\D_t)$.
\end{lemma}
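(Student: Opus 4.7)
The plan is to prove the set equality $\D_{t+1} = N(\D_t)$ by an independent, vertex-by-vertex analysis: at round $t+1$ each player $u$ picks a cost-minimizing behavior, and the three candidate costs at round $t+1$ are completely determined by $\DeltaD(u)$ at round $t$. So it is enough to show that $u$'s best response at round $t+1$ is a non-defector behavior if and only if $\DeltaD(u) \geq 1$ at round $t$.

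For the inclusion $\D_{t+1} \subseteq N(\D_t)$, I would take $u \notin N(\D_t)$, so $\DeltaD(u) = 0$, reducing the three costs to $0$ (defector), $E_h$ (hypocritical) and $1$ (cooperator). Since by standing assumption $0 < E_h < 1$, defecting is the strict unique minimizer and $u \notin \D_{t+1}$. For the reverse inclusion $N(\D_t) \subseteq \D_{t+1}$, I would take $u \in N(\D_t)$, so $\DeltaD(u) \geq 1$; Condition (i), rewritten as $\rho_d - \rho_h > E_h$, then gives that the defector cost exceeds the hypocritical cost by $(\rho_d - \rho_h)\DeltaD(u) - E_h$, which is strictly positive. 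Hence defecting is never cost-minimizing, so $u \in \D_{t+1}$.

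The only point to be careful about is the tie-breaking rule: to be sure that $u$'s next behavior falls on the correct side of the defector/non-defector split in each case, the above inequalities must be \emph{strict}, which is guaranteed by the strict signs in $0 < E_h < 1$ and in $E_h + \rho_h < \rho_d$. No other difficulty arises; in particular, Condition (ii) plays no role here and will only be needed in the next stage of the overall argument, when one has to move from the hypocritical/defector split to the cooperator/hypocritical split.
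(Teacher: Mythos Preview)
Your proposal is correct and follows essentially the same approach as the paper: both directions are proved by the same case analysis on $\DeltaD(u)$, with the key inequality in the $N(\D_t)\subseteq\D_{t+1}$ direction being exactly the comparison of hypocritical versus defector cost via Condition~(i). Your remark about strictness and tie-breaking is a nice explicit touch that the paper leaves implicit.
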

\begin{proof}
    First, we prove that $N(\D_t) \subseteq \D_{t+1}$.
    Let $u \in N(\D_t)$. By definition, there exists a neighbor $v$ of $u$ such that $v$ is a non-defector at round $t$. We claim that for $u$, being a hypocritical in round $t+1$ is strictly more beneficial than being a defector. Indeed, as a hypocritical it will pay $E_h+ \rho_{h} \cdot \DeltaD(u)$, and as a defector it will pay 
    $\rho_{d} \cdot \DeltaD(u)$. Since $v$ is non-defector then $\DeltaD(u) \geq 1$, and hence:
    \[E_h+ \rho_{h} \cdot \DeltaD(u)\leq (E_h+ \rho_{h}) \cdot \DeltaD(u)< \rho_{d} \cdot \DeltaD(u),\]
    where we used Condition {\em (i)} in the last inequality. 
    Therefore, the cost of $u$ as a defector is strictly higher than its cost as a 
    hypocritical. This implies that in the next round $u$ will be either a hypocritical or a cooperative player, i.e., $u \in \D_{t+1}$.
    
    To prove the other inclusion, $\D_{t+1} \subseteq N(\D_t)$, consider a player~$u \notin N(\D_t)$, i.e., having only defectors as neighbors at round $t$, or in other words, at round $t$, we have $\DeltaD(u) = 0$.  If~$u$ chooses to be a defector at round $t+1$, then it would pay~$\DeltaD(u) \rho_d = 0$, which is less than what it would pay as a hypocritical ($E_h + \DeltaD(u) \rho_h = E_h$) or cooperator ($1$). Hence, $u \notin \D_{t+1}$.
\end{proof}

\begin{lemma} \label{lem:second_stage}
    Assume that Conditions {\em (i)} and {\em (ii)} hold, and assume that for some round~$t_0$ all players are non-defectors. Then, at round~$t_0+1$, all players will be cooperative, and will remain cooperative forever.
\end{lemma}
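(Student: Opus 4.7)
The plan is to show that at round $t_0$ every player faces a situation where cooperating is strictly the cheapest option, and then to observe that the argument is self-reinforcing once everyone is cooperative. The key quantitative ingredient is the minimal-degree assumption: since everyone is a non-defector at round $t_0$, every player $u$ satisfies $\DeltaD(u) = |N(u)| \geq \Delta$, so both Conditions $(i)$ and $(ii)$ can be combined with the full $\Delta$ lower bound on $\DeltaD(u)$.

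First I would bound the hypocritical cost from below. For any player $u$ at round $t_0$, its hypocritical cost is $E_h + \rho_h \cdot \DeltaD(u) \geq E_h + \rho_h \cdot \Delta$, which is strictly greater than $1$ by Condition $(ii)$. Hence cooperating is strictly cheaper than being hypocritical. Next I would bound the defector cost by chaining: Condition $(i)$ gives $\rho_d \cdot \DeltaD(u) > (E_h + \rho_h)\cdot \DeltaD(u)$, and since $\DeltaD(u) \geq 1$ we have $(E_h + \rho_h)\cdot \DeltaD(u) \geq E_h + \rho_h \cdot \DeltaD(u)$, which by the previous step already exceeds $1$. Thus cooperating is strictly cheaper than defecting as well. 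Since both inequalities are strict, there is no tie-breaking to worry about, and every player will deterministically choose cooperation at round $t_0+1$.

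It then remains to show that the all-cooperative configuration is a fixed point. This is immediate by induction: if at some round $t \geq t_0+1$ everyone is cooperative, then $\DeltaD(u) = |N(u)| \geq \Delta$ for every player, and the exact same two comparisons above, with the same strict inequalities, force each player to again select cooperation. So the population remains cooperative forever.

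The argument is essentially a direct verification, so I do not anticipate a genuine obstacle; the only thing requiring care is the chaining step $(E_h + \rho_h)\cdot \DeltaD(u) \geq E_h + \rho_h \cdot \DeltaD(u)$, which uses $E_h(\DeltaD(u)-1) \geq 0$ and hence crucially needs $\DeltaD(u) \geq 1$, a fact guaranteed here because every neighbor of $u$ is a non-defector at round $t_0$.
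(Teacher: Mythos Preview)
Your proof is correct and follows essentially the same route as the paper: the key step in both is that $\DeltaD(u)\geq\Delta$ combined with Condition~\emph{(ii)} makes the hypocritical cost exceed~$1$. The only cosmetic difference is that the paper dispatches the defector comparison by invoking Lemma~\ref{lem:contagion} (once all players are non-defectors they stay non-defectors), whereas you handle it directly via the chained inequality $\rho_d\,\DeltaD(u)>(E_h+\rho_h)\,\DeltaD(u)\geq E_h+\rho_h\,\DeltaD(u)>1$; your version is slightly more self-contained.
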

\begin{proof}
Assume that at round~$t_0$ all players are non-defectors. 
    By Lemma~\ref{lem:contagion}, we know that every player will remain non-defector for every round after~$t_0$. It therefore remains to show, that at the end of round~$t$, for any $t\geq t_0$, being a cooperative is strictly more beneficial than being a hypocritical. 
    
    Observe that since each player has at least $\Delta$ neighbors, and since all neighbors are non-defectors at round~$t$, then for every player~$u$, we have $\DeltaD(u) \geq \Delta$ at round $t$.
    Therefore, being a hypocritical costs $E_h+\rho_{h} \cdot \DeltaD(u) \geq E_h+\rho_{h} \cdot \Delta$.
    By Condition {\em (ii)}, this quantity is strictly greater than 1, hence more than what a cooperative player would pay.
  It follows that, at the end of round $t$, being a cooperative is strictly more beneficial than being a hypocritical, implying that all players would be cooperators at round $t+1$. This completes the proof of Lemma \ref{lem:second_stage}.
\end{proof}

\begin{lemma} \label{lem:punishing_pair}
    Assume that Conditions {\em (i)} and {\em (ii)} hold, and assume that for some round $t_0$, we have $\D_{t_0} \cap N(\D_{t_0}) \neq \emptyset$, that is, there are at least two neighboring non-defectors. Then in at most $\diam(G) + 1$ rounds as of round $t_0$,
    the system will be in the configuration in which all players are cooperative, and will remain in this configuration forever.
\end{lemma}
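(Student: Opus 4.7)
The plan is to show that starting from two adjacent non-defectors $u \sim v$, the non-defector state propagates outward along shortest paths, reaching every vertex within $\diam(G)$ rounds; Lemma~\ref{lem:second_stage} then finishes in one additional round.

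First, I would observe that the hypothesis $u,v \in \D_{t_0}$ with $u \sim v$ is self-perpetuating. By Lemma~\ref{lem:contagion}, $\D_{t_0+1} = N(\D_{t_0})$, and since $u \in N(v)$ and $v \in N(u)$, both $u$ and $v$ lie in $\D_{t_0+1}$; a trivial induction on $t$ then yields $u,v \in \D_t$ for every $t \geq t_0$. In particular, the pair $\{u,v\}$ is a permanent source of non-defector influence.

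Next, for a vertex $w$ set $\delta(w) := \min\{d_G(w,u),\, d_G(w,v)\}$. I would prove by induction on $k \geq 0$ that $\{w : \delta(w) \leq k\} \subseteq \D_{t_0+k}$. The base case $k=0$ holds because $\{u,v\} \subseteq \D_{t_0}$. For the inductive step, take $w$ with $\delta(w) \leq k+1$. If $\delta(w) = 0$, then $w \in \{u,v\}$ and the other endpoint, which by the previous paragraph is a non-defector at round $t_0+k$, is a neighbor of $w$, so $w \in N(\D_{t_0+k}) = \D_{t_0+k+1}$ by Lemma~\ref{lem:contagion}. Otherwise $w$ admits a neighbor $w'$ on a shortest path to $\{u,v\}$ with $\delta(w') = \delta(w)-1 \leq k$; by the induction hypothesis $w' \in \D_{t_0+k}$, and once again Lemma~\ref{lem:contagion} gives $w \in \D_{t_0+k+1}$. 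Setting $k = \diam(G)$ yields $\D_{t_0+\diam(G)} = V(G)$, and Lemma~\ref{lem:second_stage} (applied at round $t_0 + \diam(G)$) then produces a fully cooperative configuration at round $t_0 + \diam(G) + 1$ that persists forever, matching the claimed bound of $\diam(G)+1$ rounds.

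The only delicate point is the case $\delta(w)=0$ in the inductive step, i.e.\ verifying that $u$ and $v$ themselves remain non-defectors throughout. This is precisely where the hypothesis of two \emph{adjacent} non-defectors is used, rather than a single non-defector: the mutual adjacency gives each of $u,v$ a guaranteed non-defector neighbor at every round, which by Lemma~\ref{lem:contagion} perpetuates their non-defector status and serves as the seed from which the wavefront $\{w : \delta(w) \leq k\}$ spreads.
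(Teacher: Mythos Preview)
Your proof is correct and follows essentially the same approach as the paper: both arguments track the outward spread of non-defectors from the adjacent pair, showing that the set of vertices within distance $k$ of $\{u,v\}$ lies in $\D_{t_0+k}$, and then invoke Lemma~\ref{lem:second_stage}. The paper phrases this via the iterated-neighborhood sets $U_{j+1}=N(U_j)$ with the invariant $U_j\subseteq N(U_j)$, whereas you work directly with the distance function $\delta$ and isolate the self-perpetuation of $\{u,v\}$ up front; these are cosmetic differences only.
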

\begin{proof}
    By assumption, there exists two neighbors $u_0, u_0' \in \D_{t_0}$. We define inductively a sequence of sets $\{U_j\}_{j\geq 0}$, setting $U_0 = \{ u_0, u_0' \}$, and for every   $j$, defining $U_{j+1} = N(U_j)$.
    
    \begin{claim} \label{claim:induction_U_j}
        For every integer~$j\geq 0$,
        \begin{equation} \label{eq:claim1}
            U_j \subseteq N(U_j)
        \end{equation}
        (each player in $U_j$ has at least one neighbor in $U_j$), and
        \begin{equation} \label{eq:claim2}
            U_j \subseteq \D_{t_0 + j} 
        \end{equation}
        (each player in $U_j$ is non-defector at round $t_0 + j$).
    \end{claim}
    \noindent{\em Proof of Claim \ref{claim:induction_U_j}.} The proof proceeds by induction. 
        The base of the induction, corresponding to $j=0$, is true by the assumption on $u_0$ and $u_0'$. Next, let us assume that the claim holds for some integer $j\geq 0$. By the induction hypothesis with respect to~Eq.~\eqref{eq:claim1}, $U_j \subseteq N(U_j)$, so $N(U_j) \subseteq N(N(U_j))$, and hence, by definition, $U_{j+1} \subseteq N(U_{j+1})$.
        In other words, we have proved that~Eq.~\eqref{eq:claim1} holds at round $j+1$. Next, by the induction hypothesis with respect to~Eq.~\eqref{eq:claim2}, we have $U_j \subseteq \D_{t_0 + j}$, so $N(U_j) \subseteq N(\D_{t_0 + j})$. By definition of $U_{j+1}$, and by Lemma~\ref{lem:contagion}, we can rewrite this as $U_{j+1} \subseteq \D_{t_0 + j + 1}$, establishing~Eq.~\eqref{eq:claim2} at round $j+1$. This completes the induction step and concludes the proof of Claim \ref{claim:induction_U_j}. \qed\\

    A direct consequence of~Eq.~\eqref{eq:claim1} in Claim \ref{claim:induction_U_j} and the definition of the sequence $\{U_j\}_j$ is that 
    $U_{j+1} = U_j \cup N(U_j)$, and so, $U_{j+1}$ is equal to $U_j$ together with all the neighbors of players in $U_j$. As a consequence, for every $j \geq \diam(G)$, the set $U_j$ contains all players. By~Eq.~\eqref{eq:claim2} of Claim~\ref{claim:induction_U_j}, this implies that from round $t_0 + \diam(G)$ onward, all players are non-defectors.
    
    By Lemma~\ref{lem:second_stage}, we conclude that from round~$t_0 + \diam(G) + 1$ onward, all players are cooperative. This completes the proof of Lemma \ref{lem:punishing_pair}.
\end{proof}

\begin{lemma} \label{lem:non-bipartite}
    Assume that Conditions {\em (i)} and {\em (ii)} hold, and that $G$ is not bipartite. If $\D_0 \neq \emptyset$, i.e., if initially there is at least one non-defector player, then in at most $3 \cdot \diam(G) + 1$ rounds,
    the system will be in the configuration in which all players are cooperative, and will remain in this configuration forever.
\end{lemma}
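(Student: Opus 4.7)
The plan is to reduce this lemma to Lemma \ref{lem:punishing_pair} by showing that within at most $2\,\diam(G)$ rounds, the configuration reaches a state in which $\D_t \cap N(\D_t) \neq \emptyset$, i.e., two neighboring non-defectors exist. Once such a round $t_0 \leq 2\,\diam(G)$ is found, applying Lemma \ref{lem:punishing_pair} from $t_0$ yields full cooperation within $\diam(G) + 1$ further rounds, for a grand total of at most $3\,\diam(G) + 1$ rounds, as claimed.

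The key observation driving everything is that iterating Lemma \ref{lem:contagion} gives, for any $u_0 \in \D_0$, the inclusion $\D_t \supseteq N^t(u_0)$, where $N^t(u_0)$ denotes the set of vertices reachable from $u_0$ by a walk of length \emph{exactly} $t$. Thus, to produce two adjacent non-defectors at a common round $t_0$, it suffices to exhibit two adjacent vertices $v, w$ together with walks from $u_0$ to $v$ and from $u_0$ to $w$ of the \emph{same} length $t_0$. In bipartite graphs this is impossible because walk parity is rigid, which is precisely why this step requires the non-bipartite hypothesis.

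To exhibit these walks, I would leverage the short odd cycle supplied by Lemma \ref{lem:odd-girth}: $G$ contains a shortest odd cycle $C = (u_1, u_2, \dots, u_{2k+1})$ of length $2k+1 \leq 2\,\diam(G) + 1$, so $k \leq \diam(G)$. Fix any $u_0 \in \D_0$ (non-empty by assumption), and let $d = d_G(u_0, u_1) \leq \diam(G)$. Starting from a shortest path of length $d$ from $u_0$ to $u_1$, one may traverse $C$ in either direction for $k$ further steps: going ``forward'' lands at $u_{k+1}$, while going ``backward'' lands at $u_{k+2}$. Both walks have length $d + k$, and $u_{k+1}$ and $u_{k+2}$ are adjacent on $C$. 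Hence at round $t_0 := d + k \leq 2\,\diam(G)$, both $u_{k+1}$ and $u_{k+2}$ lie in $N^{t_0}(u_0) \subseteq \D_{t_0}$, giving $\D_{t_0} \cap N(\D_{t_0}) \neq \emptyset$.

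The main conceptual step is the cycle-splitting trick above, which is the only place where the odd-length assumption is used; once it succeeds, the rest of the argument is a clean invocation of Lemma \ref{lem:punishing_pair}. The minor subtlety to double-check is that the concatenation of the shortest $u_0 \to u_1$ path with either half of $C$ is a valid walk even if the path happens to meet $C$ early (repetitions of vertices are permitted in walks), so the length $d+k$ and the endpoints $u_{k+1}$, $u_{k+2}$ remain exactly as stated. Summing $2\,\diam(G) + \diam(G) + 1$ yields the claimed bound.
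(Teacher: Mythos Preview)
Your proof is correct and follows essentially the same route as the paper: both arguments use Lemma~\ref{lem:odd-girth} to locate a short odd cycle, propagate non-defection from $\D_0$ to the cycle via Lemma~\ref{lem:contagion} along a path of length at most $\diam(G)$, then split the cycle to put the adjacent pair $u_{k+1}, u_{k+2}$ into $\D_{t_0}$ at a common round $t_0 \le 2\,\diam(G)$, and conclude with Lemma~\ref{lem:punishing_pair}. Your abstraction $\D_t \supseteq N^t(u_0)$ is a clean way to package the inductive use of $\D_{t+1} = N(\D_t)$ that the paper carries out step by step.
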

\begin{proof}
    By assumption, $G$ is not bipartite, or equivalently, $G$ contains at least one odd-length cycle. Let $(u_1,\ldots,u_{2k+1})$ be a shortest odd-length cycle of~$G$. Given $s = d_G(u_1,\D_0)$, let $(v_0 \in \D_0, v_1,\ldots,v_{s-1},v_s = u_1)$ be a shortest path from $\D_0$ to $u_1$. By Lemma~\ref{lem:contagion}, it follows by induction that for every $t \in \{0,\ldots,s\}$, $v_t \in \D_t$, and hence, $u_1 \in \D_s$ (note that, although $v_{t-1} \in \D_{t-1}$, it could be that $v_{t-1} \notin \D_t$). Similarly, for every $t \in \{1,\ldots,k\}$, $u_{1+t} \in \D_{s+t}$ and $u_{2k+2-t} \in \D_{s+t}$. Hence, $u_{k+1} \in \D_{s+k}$ and $u_{k+2} \in \D_{s+k}$. In other words, we have just showed that in round ${s+k}$, we have two non-defector neighbors. 

    By the definition of diameter, $s \leq \diam(G)$. By Lemma~\ref{lem:odd-girth}, we also have~$k \leq \diam(G)$. By Lemma~\ref{lem:punishing_pair}, the system needs at most $\diam(G)+1$ rounds after round $s+k$ to reach full cooperation. We conclude that it reaches cooperation in at most $3\cdot\diam(G)+1$ rounds, as stated.
\end{proof}

\begin{lemma} \label{lem:bipartite}
    Assume that Conditions {\em (i)} and {\em (ii)} hold, and that $G$ is bipartite. The set of players can be partitioned into $U$ and $V$ such that $U \cap N(U) = V \cap N(V) = \emptyset$. If $\D_0 \cap U \neq \emptyset$ and $\D_0 \cap V \neq \emptyset$, then in at most $T=\diam(G) + 1$ rounds, 
    the system will be in the configuration in which all players are cooperative, and will remain in this configuration forever.
\end{lemma}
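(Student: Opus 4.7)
The plan is to exploit the fact that $\D_0$ meets both sides of the bipartition to guarantee that non-defectors propagate to vertices of either parity. Write $A_0 = \D_0 \cap U$ and $B_0 = \D_0 \cap V$; by assumption both are non-empty. Iterating Lemma~\ref{lem:contagion} and using $N(X \cup Y) = N(X) \cup N(Y)$, one gets $\D_t = N^t(A_0) \cup N^t(B_0)$, where $N^t$ denotes the $t$-fold iterated open neighborhood. Because $G$ is bipartite, $N^t(A_0)$ lies entirely in $U$ when $t$ is even and entirely in $V$ when $t$ is odd, with the roles of $U$ and $V$ swapped for $N^t(B_0)$.

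The core combinatorial claim I would prove is the following: for any non-empty $S \subseteq U$ (resp. $S \subseteq V$), any $t \geq 0$, and any $w \in V(G)$, one has $w \in N^t(S)$ if and only if $d_G(w,S) \leq t$ and $t - d_G(w,S)$ is even. The forward direction uses the standard bipartite invariant that any two walks between a fixed pair of vertices have the same length parity, equal to the parity of their distance. The reverse direction pads a shortest path from $S$ to $w$ by $(t - d_G(w,S))/2$ back-and-forth moves on any edge incident to $w$; such an edge exists because the minimum degree satisfies $\Delta \geq 1$.

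Now apply the claim with $t = \diam(G)$. For any vertex $w$, both $d_G(w, A_0)$ and $d_G(w, B_0)$ are at most $\diam(G) = t$. Moreover, because $A_0 \subseteq U$ and $B_0 \subseteq V$ lie on opposite sides of the bipartition, the two distances $d_G(w,A_0)$ and $d_G(w,B_0)$ have opposite parities, so exactly one of them matches the parity of $t$. For the corresponding choice of source, the claim yields $w \in N^t(A_0)$ or $w \in N^t(B_0)$. Hence $\D_{\diam(G)}$ is the entire vertex set, and a single application of Lemma~\ref{lem:second_stage} with $t_0 = \diam(G)$ yields that from round $\diam(G) + 1$ onward all players are cooperators and remain so forever. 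The main obstacle is the parity bookkeeping inside the core claim: one must carefully combine the bipartite walk-parity invariant with the back-and-forth padding argument so as to cover every vertex by at least one of the two sources regardless of whether $\diam(G)$ is even or odd.
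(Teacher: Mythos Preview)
Your proposal is correct. The underlying idea matches the paper's---propagate non-defector status from each side of the bipartition separately and use parity to ensure full coverage by round $\diam(G)$---but the packaging is different. The paper defines the sequence $U_k$ with $U_0 = U \cap \D_0$ and $U_{k+1} = N(N(U_k))$, proves inductively that $U_k \subseteq \D_{2k}$ and that $U_{k_0} = U$ for $k_0 = \lfloor \diam(G)/2 \rfloor$, then repeats the argument for the $V$ side. You instead prove a single explicit characterization of $N^t(S)$ when $S$ lies in one part (membership is equivalent to $d_G(w,S) \le t$ with $t - d_G(w,S)$ even), and then exploit the opposite parities of $d_G(w,A_0)$ and $d_G(w,B_0)$ to cover every vertex at $t = \diam(G)$ in one stroke. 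Your route isolates the combinatorics into a reusable walk-parity lemma and handles both parities of $\diam(G)$ uniformly; the paper's inductive sequence is slightly more hands-on but avoids stating the full characterization. Both yield the same bound $\diam(G)+1$ and both finish via Lemma~\ref{lem:second_stage}.
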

\begin{proof}
    By assumption, $G$ is bipartite.
    We define inductively a sequence of subsets of the set of players, $U_0 = U \cap \D_0$, and for every $k \geq 0$, $U_{k+1} = N(N(U_t))$ -- that is, $U_{k+1}$ contains the neighbors (in $U$) of the neighbors (in $V$) of the players in $U_k$. Note that, as a consequence of this definition, $U_k \subseteq U_{k+1}$. Let $k_0 = \floor{\diam(G) / 2}$.
    
    Let us show that $U_{k_0} = U$ (and hence, that for every $k \geq k_0$, $U_k = U$). For this purpose, consider a player $u \in U$. Let $(u_0,v_0,u_1,v_1,\ldots,u_{s-1},v_{s-1},u_s)$, where $u_0 \in U_0$ and $u_s = u$ be a shortest path from~$U_0$ to~$u$. This path is of length $2s \leq \diam(G)$, so $s \leq k_0$.
    Since $u_{\ell+1} \in N(N(u_\ell))$ for every $\ell \leq s$, it follows by induction on $\ell$ that for every $\ell \leq s$, $u_\ell \in U_\ell$, and hence that $u \in U_s$.
    As we have seen, the sequence $\{U_k\}_k$ is non-decreasing and since $s \leq k_0$, we obtain $u \in U_{k_0}$. This establishes that $U_{k_0} = U$.
    
    Next, we prove by induction that for every $k$, $U_k \subseteq \D_{2k}$. This is true for $k=0$ by definition. Assume that this is true for some integer~$k\geq 0$. We have
    $$ U_{k+1} = N(N(U_k)) \subseteq N(N(\D_{2k})) = \D_{2k + 2},$$
    where the second transition is by the induction hypothesis and the last transition is due to Lemma \ref{lem:contagion}.
    This concludes the induction proof. Since we have already proved that $U_{k_0} = U$, we conclude that $U \subseteq \D_{2k_0}$.
    
    We can apply the same reasoning to $V$ and obtain that $V \subseteq \D_{2k_0}$. Thus, $\D_{2k_0}$ contains all players. By Lemma~\ref{lem:second_stage}, from round~$2k_0+1 \leq 2\diam(G) + 1$ onward, the system will be in a configuration in which all players are cooperative and will remain in this configuration forever. This concludes the proof of Lemma~\ref{lem:bipartite}.
\end{proof}

Finally, we wrap the aforementioned lemmas to prove the theorem with respect to different networks. 
Recall that initially, each player is set to be a defector with probability $1-\epsilon$, a hypocritical with probability $\epsilon/2$, and a cooperative with probability $\epsilon/2$, for some fixed $0<\epsilon<1$ independent of $n$. We consider three families of networks.
\begin{itemize}
    \item If $G$ is not bipartite, then Lemma \ref{lem:non-bipartite} guarantees that the system converges to full cooperation in $3\cdot\diam(G)+1$ rounds, provided that the initial configuration contains at least one non-defector. This happens with overwhelmingly high probability, specifically,  $1-(1-\epsilon)^n=
    1-\frac{1}{c^n}$, for some constant $c>1$. This completes the proof of the first item in  Theorem \ref{thm:main-SI}.
    \item If $G$ is bipartite, then the set of players in $G$ can be split into two disjoint sets $U$ and $V$  such that all edges are between $U$ and $V$. Lemma~\ref{lem:bipartite} guarantees that the system converges to full cooperation $\diam(G)+1$ rounds, provided that there is at least one non-defector in $U$ and at least one non-defector in $V$. Let us see what is the probability that the initial configuration satisfies this.
    \begin{itemize}
        \item If $G$ is $\Delta$-regular, then both $U$ and $V$ contain precisely $n/2$ players. This follows from the fact that the number of edges outgoing from $U$, respectively $V$, is precisely $\Delta|U|$, respectively $\Delta|V|$, and these numbers are equal. In this case the probability that there is at least one non-defector in $U$ and at least one non-defector in $V$ is $\pa{1-(1-\epsilon)^{n/2}}^2 \geq 1-2(1-\epsilon)^{n/2} >1-\frac{1}{c^n}$, for some constant $c>1$. This completes the proof of the second item in  Theorem \ref{thm:main-SI}.
        
        \item For general bipartite $G$  with minimal degree $\Delta$, we have that both $|U|$ and $|V|$ are greater or equal to $\Delta$. Hence, the probability that there is at least one non-defector in $U$ and at least one non-defector in $V$ is at least $\pa{1-(1-\epsilon)^{\Delta}}^2 \geq 1-2(1-\epsilon)^{\Delta} >1-\frac{1}{c^\Delta}$, for some constant $c>1$. This completes the proof of the third item in  Theorem \ref{thm:main-SI}.
    \end{itemize}
\end{itemize}
\end{proof}

\section{Proof of Theorem \ref{thm:second}} \label{secSI:generalized_model}

The goal of this section is to prove Theorem \ref{thm:second}. In fact, we prove the more general theorem below.

\begin{theorem}\label{thm:generalized_version}
Consider a network $G$ with $n$ players and minimal degree $\Delta$ undergoing the two-order model, so that the following conditions hold.
\begin{itemize}
    \item {\em Condition $(i)$} $\alpha_2 < \beta_2$, and
    \item {\em Condition $(ii)$} $\alpha_1 < \Delta \beta_1$.
\end{itemize}
Then the following holds for some constant $c>1$.
\begin{itemize}
    \item 
If $G$ is not bipartite then  with probability at least $1-\frac{1}{c^n}$, in at most $3 \cdot \diam(G) + 1$ rounds, the system will be in a configuration in which all players are cooperative, and will remain in this configuration forever.  
\item 
If $G$ is bipartite and $\Delta$-regular  then with probability at least $1-\frac{1}{c^n}$, in at most $\diam(G) + 1$ rounds, the system will be in a configuration in which all players are cooperative, and will remain in this configuration forever. 
\item
If $G$ is bipartite then with probability at least $1-\frac{1}{c^\Delta}$, in at most $\diam(G) + 1$ rounds, the system will be in a configuration in which all players are cooperative, and will remain in this configuration forever.
\end{itemize}
\end{theorem}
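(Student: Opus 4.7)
The plan is to mirror the proof of Theorem~\ref{thm:main-SI} step-by-step, replacing the role of ``non-defector'' by ``second-order cooperator''. Specifically, I would define $\D_t$ to be the set of players $u$ with $\chi_2(u) = 1$ at round $t$, i.e., the hypocritical and cooperator players at round $t$. The crux of the argument is to establish a contagion identity $\D_{t+1} = N(\D_t)$ together with a second-stage lemma analogous to Lemma~\ref{lem:second_stage}; from there the deterministic convergence bound and the probabilistic estimates transfer essentially verbatim.

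For the contagion identity, assume Condition~$(i)$. For $N(\D_t) \subseteq \D_{t+1}$, if $u$ has a neighbor $v \in \D_t$ then $\Delta_2(u) \geq 1$, and the cost of being a defector exceeds that of being a hypocritical by $\Delta_2(u)\beta_2 - \alpha_2 \geq \beta_2 - \alpha_2 > 0$, while the cost of being a private cooperator exceeds that of being a cooperator by the same amount. Hence the cost-minimizing behavior of $u$ at round $t+1$ must satisfy $\chi_2(u) = 1$, i.e., $u \in \D_{t+1}$. For the reverse inclusion, if $\Delta_2(u) = 0$ then defector costs zero, whereas any behavior with $\chi_1(u) = 1$ or $\chi_2(u) = 1$ costs at least $\min(\alpha_1, \alpha_2) > 0$; hence $u$ is a defector at round $t+1$ and $u \notin \D_{t+1}$.

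The second-stage lemma asserts: if at some round $t_0$ every player lies in $\D_{t_0}$, then from round $t_0+1$ onward every player is a full cooperator. Once $\D_{t_0}$ contains all players, the contagion identity guarantees this is preserved, so every $u$ has $\Delta_2(u) \geq \Delta$ at every $t \geq t_0$. Condition~$(ii)$, $\alpha_1 < \Delta\beta_1$, implies that cooperator strictly beats hypocritical; Condition~$(i)$ together with $\Delta \geq 1$ implies that cooperator strictly beats private cooperator; and combining both inequalities gives $\alpha_1 + \alpha_2 < \Delta(\beta_1 + \beta_2)$, so cooperator strictly beats defector as well. Hence all players switch to the cooperator behavior at round $t_0 + 1$ and remain there forever.

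With these two lemmas in hand, the proofs of Lemmas~\ref{lem:punishing_pair}, \ref{lem:non-bipartite}, and \ref{lem:bipartite} transfer directly, since they rely only on the propagation identity $\D_{t+1} = N(\D_t)$ combined with the graph-theoretic odd-cycle and bipartition arguments (which are model-independent). The only difference is in the initial probability estimates: a given player belongs to $\D_0$, i.e., is initially hypocritical or cooperator, with probability $2\epsilon/3$ rather than $\epsilon$. Since $2\epsilon/3$ is still a positive constant independent of $n$ and $\Delta$, the bounds $(1-2\epsilon/3)^n \leq 1/c^n$ and $(1-2\epsilon/3)^\Delta \leq 1/c^\Delta$ continue to hold for a suitable constant $c > 1$, yielding the three claims of the theorem for non-bipartite, $\Delta$-regular bipartite, and general bipartite networks respectively. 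The main point requiring care is verifying the second-stage argument against the newly available private-cooperator behavior; as noted, Condition~$(i)$ alone suffices to rule it out whenever $\Delta_2(u) \geq 1$, so no additional hypothesis is needed.
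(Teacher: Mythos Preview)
Your proof is correct, but it follows a different route from the paper's. The paper proves Theorem~\ref{thm:generalized_version} by a \emph{reduction} to Theorem~\ref{thm:main-SI}: after observing that private cooperators are never chosen for $t\geq 1$ under Condition~$(i)$, it introduces the parameter map
\[
E_h=\frac{\alpha_2}{\alpha_1+\alpha_2},\qquad \rho_h=\frac{\beta_1}{\alpha_1+\alpha_2},\qquad \rho_d=\frac{\beta_1+\beta_2}{\alpha_1+\alpha_2},
\]
and checks that every cost in the two-order model equals $(\alpha_1+\alpha_2)$ times the corresponding cost in the main model, so the two dynamics coincide round by round (with the same tie-breaking randomness). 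Conditions~$(i)$ and~$(ii)$ are then shown to be equivalent to the hypotheses of Theorem~\ref{thm:main-SI}, and the initial distribution maps to the main model with $\epsilon'=\tfrac{2}{3}\epsilon$.

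Your approach instead re-establishes the contagion identity $\D_{t+1}=N(\D_t)$ and the second-stage lemma directly for the two-order model, then reuses Lemmas~\ref{lem:punishing_pair}--\ref{lem:bipartite} verbatim. This is more self-contained and slightly more elementary, since it avoids the cost-rescaling observation; the paper's reduction, on the other hand, makes explicit the structural fact that the two-order model is (after one round) an affine reparametrization of the main model, which is conceptually informative in its own right. Both arguments converge on the same $2\epsilon/3$ probability for membership in $\D_0$ and hence on the same constant $c$.
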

Before we prove the theorem, we note that Condition $(ii)$ is necessary for the emergence of cooperation on $\Delta$-regular graphs, since having $\alpha_1 > \Delta \beta_1$
would imply that it is always beneficial to defect on the first level.
\begin{proof}[Proof of Theorem \ref{thm:generalized_version}]
Consider a network $G$, and parameters $\alpha_1$, $\alpha_2$, $\beta_1$, and $\beta_2$, satisfying Conditions  $(i)$ and $(ii)$ in Theorem~\ref{thm:generalized_version}. We first observe that under Condition {\em (i)}, at any round $t\geq 1$, no player ever chooses to be a private cooperator in $G$. Indeed, if $\Delta_2(u) \geq 1$, then a private cooperator would pay $\alpha_1  + \Delta_2(u) \spp > \alpha_1 + \Delta_2(u) \alpha_2 \geq \alpha_1 + \alpha_2$, hence, more than the cost of cooperating, and when $\Delta_2(u) = 0$, a private cooperator would pay~$\alpha_1 > 0$, hence, more than the cost of defecting. It follows that, although the initial configuration may include private cooperators, this behavior completely disappears from the system after the first round. 
    
    Next, we aim to prove Theorem~\ref{thm:generalized_version} by reducing it to Theorem~\ref{thm:main-SI}. 
    Let $G'$ be a network identical to $G$, undergoing the main model (for which Theorem~\ref{thm:main-SI} applies), taking the parameters:
    \begin{equation}\label{eq:mapped_parameters}
     E_h = \frac{\alpha_2}{\alpha_1 + \alpha_2}, ~~~\rho_h = \frac{\beta_1}{\alpha_1 + \alpha_2}, ~~~\text{ and }~~~\rho_d = \frac{\beta_1 + \beta_2}{\alpha_1 + \alpha_2}. 
       \end{equation}
    A {\em configuration} $\cal{C}$ on $G$ is an assignment of behaviors, namely, either defectors, cooperators, hypocritical, or private cooperators, to the players in $G$. Recall that the initial configuration on $G$ is sampled according to the distribution $\psi(\epsilon)$, so that each player is initially chosen to be a defector with probability $1-\epsilon$, and any of the three remaining behaviors with probability $\epsilon/3$. 
     
     We next define a mapping $f$, transforming each initial configuration $\cal{C}$ on $G$ to an initial configuration $\cal{C}'$ on $G'$. The mapping is very simple: All players in $G'$ remain with the same behavior as in $G$ except that private cooperators are turned into defectors. It is easy to see that given  the distribution $\psi(\epsilon)$, this mapping induces the distribution $\psi'(\epsilon')$ 
        on the initial configurations in $G'$, where $\epsilon'=\frac{2}{3}\epsilon$. Indeed, under this mapping, a player in $G'$ is initially chosen to be a defector with probability $1-\epsilon+\epsilon/3=1-\epsilon'$, a cooperator with probability $\epsilon/3=\epsilon'/2$, and hypocritical with probability $\epsilon/3=\epsilon'/2$.
     
     At this point, we address a technicality that concerns the randomness involved in breaking ties. That is, recall that at any round $t$, if the minimal cost is attained by several behaviors then the player chooses one of them uniformly at random. One way to implement this is by considering 
     a certain order between the behaviors, and sampling a number uniformly at random  $r\in [0,1]$. For instance, consider the following ordering:  $\mbox{cooperator} > \mbox{hypocritical} > \mbox{defector}$ (as we saw, in the regime of parameters we consider, a private cooperator in the two-order model never attains the minimal cost, and hence it is never considered as an option). 
     If a player needs to choose, say, between being a cooperator or a defector, then it chooses to be a cooperator if $r$ is in $[0,0.5]$, and otherwise, it chooses to be a defector. This means that given a sequence of random numbers $\{r_i\}_{i=1}^\infty$, where $r_i\in [0,1]$, the behavior of a player is deterministically described by the behaviors of its neighbors at each round. 

Consider a fixed sequence of random numbers $\{r_i\}_{i=1}^\infty$, where $r_i\in [0,1]$.
     Let ${\cal{C}}_0$ be an initial configuration in $G$, and let ${\cal{C}}_t$ denote the configuration ${\cal{C}}_0$ at round $t$, with the costs defined according to the two-order model on $G$, and using the sequence $\{r_i\}_{i=1}^\infty$ to break ties if necessary.
     Let ${\cal{C}}'_0=f({\cal{C}}_0)$ be the mapped configuration on $G'$, 
     and let ${\cal{C}}_t'$ 
     be the corresponding configuration at round $t\geq 1$, with the costs defined according to the parameters mentioned in~Eq.~\eqref{eq:mapped_parameters}, and using the same sequence $\{r_i\}_{i=1}^\infty$ to break ties if necessary. 
     
     \begin{claim}\label{claim:same_behavior}
     For every $t\geq 1$, we have 
        \[
        {\cal{C}}_t'={\cal{C}}_t,
        \]
    \end{claim}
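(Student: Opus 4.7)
The plan is to establish the claim by induction on $t\geq 1$, with the key observation being that for any player $u$ and any behavior $b\in\{\text{cooperator},\text{hypocritical},\text{defector}\}$, the cost of $b$ in the two-order model on $G$ is exactly $(\alpha_1+\alpha_2)$ times the cost of $b$ in the main model on $G'$ with the parameters chosen in~Eq.~\eqref{eq:mapped_parameters}, \emph{provided} that the quantity $\Delta_2(u)$ computed in $G$ agrees with $\DeltaD(u)$ computed in $G'$. A direct calculation verifies this scaling: dividing $\alpha_1+\alpha_2$, $\alpha_2+\Delta_2(u)\beta_1$, and $\Delta_2(u)(\beta_1+\beta_2)$ by $\alpha_1+\alpha_2$ recovers respectively $1$, $E_h+\rho_h\cdot\Delta_2(u)$, and $\rho_d\cdot\Delta_2(u)$, which are precisely the main-model costs.

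For the base case $t=1$, I would argue that at round $0$, $\Delta_2(u)$ in $G$ equals $\DeltaD(u)$ in $G'$ for every player $u$. This is because the neighbors contributing to $\Delta_2(u)$ in $G$ are those with $\chi_2=1$, namely the cooperators and hypocritical players, while the neighbors contributing to $\DeltaD(u)$ in $G'$ are the non-defectors, again the cooperators and hypocritical players. The mapping $f$ only affects private cooperators by turning them into defectors, but private cooperators have $\chi_2=0$ so they are not counted in $\Delta_2(u)$, and their image under $f$ is a defector which is also not counted in $\DeltaD(u)$. Combined with the scaling and the already-established fact that the private-cooperator behavior is never cost-minimizing under Condition~$(i)$, the set of cost-minimizing behaviors at round~$1$ is identical in both systems, and the common tiebreaker $r_1$ yields the same choice for each player, establishing ${\cal{C}}_1'={\cal{C}}_1$.

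The inductive step should be straightforward: assuming ${\cal{C}}_t={\cal{C}}_t'$ for some $t\geq 1$, neither configuration contains any private cooperators (for $G$ this was shown at the start of the section, and for $G'$ this behavior simply does not exist in the main model). Hence the sets $\{v\in N(u):\chi_2(v)=1\}$ in $G$ and $\{v\in N(u):v\text{ is a non-defector}\}$ in $G'$ trivially coincide at round $t$, so $\Delta_2(u)=\DeltaD(u)$. The scaling argument then applies uniformly, and using the common random tiebreaker $r_{t+1}$ yields ${\cal{C}}_{t+1}'={\cal{C}}_{t+1}$.

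The only subtlety lies in the base case, where the initial configurations genuinely differ by the presence of private cooperators in ${\cal{C}}_0$ but not in ${\cal{C}}_0'$; the crucial point is that these players contribute no social-pressure in either model (since $\chi_2=0$ for them in $G$, and they are defectors in $G'$), so they are invisible to their neighbors when the costs determining round~$1$ are computed, and their own round~$1$ decisions are governed solely by their neighbor configurations, which are identical in both systems after one applies $f$.
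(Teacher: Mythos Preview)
Your proposal is correct and follows essentially the same approach as the paper: both arguments hinge on the cost-scaling identity (costs in the two-order model equal $(\alpha_1+\alpha_2)$ times the main-model costs once $\Delta_2(u)=\DeltaD(u)$), the observation that private cooperators contribute to neither $\Delta_2$ nor $\DeltaD$, and the fact that private cooperation is never optimal under Condition~$(i)$. The only cosmetic difference is that the paper frames the induction starting at $t=0$ with a ``mapped'' invariant (private cooperators in $G$ correspond to defectors in $G'$), whereas you start at $t=1$ and handle the initial mismatch explicitly in the base case; the content is the same.
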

    \noindent{\em Proof of Claim \ref{claim:same_behavior}.} 
    Our goal is to show that at any round~$t\geq 0$, a player~$u$ in ${\cal{C}}_t$ is a defector, a cooperator or a hypocritical, respectively, if and only if it is a defector, a cooperator or a hypocritical, respectively, in ${\cal{C}}'_t$, and that a private cooperator in ${\cal{C}}_t$ is a defector in ${\cal{C}}'_t$.
    
    Let us prove this claim by induction. By definition, the claim holds for $t=0$. Assume that it holds for some integer~$t\geq 0$. By the induction hypothesis, for every player $u$, the set $\Delta_2(u)$ in $G$ is equal to $\DeltaD(u)$ in $G'$. Hence, with our definitions of $E_h, \rho_h$ and $\rho_d$ in~Eq.~\eqref{eq:mapped_parameters}, we argue that the cost of being a cooperator in ${\cal{C}}'_t$ (respectively hypocritical, defector) is $\frac{1}{\alpha_1 + \alpha_2}$ times the cost of being a cooperator in ${\cal{C}}_t$ (respectively hypocritical, defector).
    Indeed, a cooperator in ${\cal{C}}'_t$ pays
    $$ 1 = \frac{1}{\alpha_1 + \alpha_2} \cdot (\alpha_1 + \alpha_2), $$
    while $(\alpha_1 + \alpha_2)$ is what it pays in ${\cal{C}}_t$. A hypocritical player in ${\cal{C}}'_t$ pays
    $$ E_h + \DeltaD(u) \cdot \rho_h = \frac{\alpha_2}{\alpha_1 + \alpha_2} + \Delta_2(u) \frac{\beta_1}{\alpha_1 + \alpha_2}, $$
    while $(\alpha_2 + \Delta_2(u) \cdot \beta_1)$ is what a hypocritical pays in ${\cal{C}}_t$, and 
    a defector in ${\cal{C}}'_t$ pays 
    $$ \DeltaD(u) \cdot\rho_d = \Delta_2(u) \cdot \frac{\beta_1 + \beta_2}{\alpha_1 + \alpha_2}, $$
    while $\Delta_2(u) \cdot (\beta_1 + \beta_2)$ is what it pays in ${\cal{C}}_t$.
    
    Moreover, recall that no player in $G$ ever chooses to be a private cooperator in rounds $t\geq 1$. 
    Hence, the behavior that minimizes the cost in $G$ is the same as in $G'$.
    It follows that at round $t+1$, all players choose the same behavior in $G$ as they would in $G'$, which establishes the induction proof, and concludes the proof of Claim \ref{claim:same_behavior}.
    \qed

    Next, we prove that with our choices of $E_h, \rho_h$ and $\rho_d$ in~Eq.~\eqref{eq:mapped_parameters}, Conditions {\em (i)} and {\em (ii)} in Theorem~\ref{thm:generalized_version} imply Conditions {\em (i)} and {\em (ii)} in Theorem~\ref{thm:main-SI}:
    \begin{align*}
        \alpha_2 < \beta_2 &\iff \alpha_2 + \beta_1 < \beta_1 + \beta_2 \\
        &\iff \frac{\alpha_2}{\alpha_1 + \alpha_2} + \frac{\beta_1}{\alpha_1 + \alpha_2} < \frac{\beta_1 + \beta_2}{\alpha_1 + \alpha_2} \\
        &\iff E_h+\rho_{h}<\rho_{d},
    \end{align*}
    and
    \begin{align*}
        \Delta \cdot \beta_1 > \alpha_1 &\iff \alpha_2 + \Delta \cdot \beta_1 > \alpha_1 + \alpha_2 \\ 
        &\iff \frac{\alpha_2}{\alpha_1 + \alpha_2} + \Delta \cdot \frac{\beta_1}{\alpha_1 + \alpha_2} > 1 \\
        &\iff E_h+\rho_{h}\cdot\Delta>1.
    \end{align*}
    Hence, we can apply Theorem~\ref{thm:main-SI} to the mapped process on $G'$. 
    It follows that in the number of rounds and probability guarantees as stated in Theorem~\ref{thm:main-SI}, $G'$ converges to the configuration in which all players are cooperators, and remains in that configuration forever. 
    By Claim \ref{claim:same_behavior}, this holds for $G$ as well, concluding the proof of Theorem \ref{thm:generalized_version}. 
\end{proof}

\end{document}